\documentclass[12pt]{amsart}
\usepackage{amsmath,amsfonts,latexsym,graphicx,amssymb,url}
\usepackage{hyperref,pdfsync}
\usepackage{amsmath,txfonts,pifont,bbding,pxfonts,manfnt}
\usepackage[active]{srcltx}
\usepackage{cases,soul}
\usepackage{wasysym,pstricks, enumerate}
\usepackage{lscape,color,datetime,subfigure,graphicx,wrapfig}

\setlength{\headheight}{15pt} \setlength{\topmargin}{10pt}
\setlength{\headsep}{30pt} 
\setlength{\textwidth}{15cm} \setlength{\textheight}{21.5cm}
\setlength{\oddsidemargin}{1cm} 
\setlength{\evensidemargin}{1cm} 
\newcommand{\dist}{\text{dist}} 

\newcommand{\R}{{\mathbb R}} 

\renewcommand{\(}{\left(}
\renewcommand{\)}{\right)}
\newtheorem{theorem}{Theorem}[section]
\newtheorem{claim}{Claim}[section]

\newtheorem{proposition}[theorem]{Proposition}

\newtheorem{remark}[theorem]{Remark}














\title[Chromatic Aberration in Metalenses ~~~\today]{Chromatic Aberration in Metalenses}
\author[C. E. Guti\'errez and A. Sabra ]{Cristian E. Guti\'errez and Ahmad Sabra}
\thanks{\today}
\address{Department of Mathematics\\Temple University\\Philadelphia, PA 19122}
\email{gutierre@temple.edu}
\address{Faculty of Mathematics, Informatics, and Mechanics,
University of Warsaw, Poland \linebreak
Current Address: American University of Beirut, P. O. Box 11-0236 / Riad El-Solh / Beirut 1107 2020, Lebanon}
\email{asabra@aub.edu.lb}


\begin{document}
\begin{abstract}
This paper provides a mathematical approach to study chromatic aberration in metalenses. It is shown that radiation of a given wavelength is refracted according to a generalized Snell's law which together with the notion of envelope yields the existence of phase discontinuities. This is then used to establish a quantitative measure of dispersion in metalenses concluding that in the visible spectrum it has the same order of magnitude as for standard lenses.    
\end{abstract}
\maketitle
\tableofcontents

\section{Introduction}
The shaping of the light wavefronts with standard lenses relies
on gradual phase changes accumulated along the optical path. 
Metalenses are ultra thin surfaces which
use nano structures to focus light in imaging.
They introduce
abrupt phase shifts over the scale of the wavelength along the
optical path to bend light in unusual ways.  These nano structures are composed of arrays of
tiny pillars, rings, and other arrangements of materials, which work together to manipulate
light waves as they pass by. The subject of metalenses is a current 
important area
of research, one of the nine runners-up for Science's Breakthrough of the Year
2016 \cite{science-runner-ups-2016}, and is potentially useful in imaging applications.
Metalenses are thinner than a sheet of paper and far lighter than
glass, and they could revolutionize optical imaging devices from microscopes to virtual reality
displays and cameras, including the ones in smartphones, see  
\cite{science-runner-ups-2016} and \cite{metalenses-cameras}.

Polychromatic light is the superposition of radiation with more than one wavelength, and when it passes  
from an homogeneous medium to another, each wavelength is refracted at a different angle causing the phenomenon of dispersion. 
The failure to focus radiation composed of several wavelengths into one point is called chromatic aberration. 
It is our purpose in this paper to analyze this phenomenon in metalenses and give a measure of the dispersion.
The design of metalenses so that chromatic aberration is controlled is an active area of research \cite{aieta-capasso:generalizedrefractionfermat},\cite{WangTsai:Broadband}, and \cite{groever-capasso:substrate aberration}.
As well as for the design of standard lenses, this is an important challenge, see \cite{lalanne-chavel} and \cite{banerji---Menon}. 
These thin surfaces are made with sub wavelength phase shifters, creating a phase discontinuity on the interface yielding to a more general law of refraction, see Section \ref{sec:Snell law}. 
These components can be also be arranged to design graphene-based metasurfaces that can be actively tuned between different regimes of operation, see \cite{biswas-gutierrez-low-2018}. 
These challenging tasks have been recently an active area of research in Optical Engineering, see the comprehensive reviews \cite{yu-capasso:flat optics}, \cite{Chen-Yu:Review}, and \cite{zhu-kutznetsov-engheta}.

Mathematically, a metalens is a pair $(\Gamma,\phi)$ where $\Gamma$ is a surface in 3-d space given by the graph of a $C^2$ function $u$, and $\phi$ is a $C^1$ function defined in a small neighborhood of $\Gamma$ called phase discontinuity. 
In this paper,  we first study the existence of metalenses refracting collimated monochromatic radiation into an arbitrary field of directions. We next analyze the behavior of those metalenses when the incoming radiation is polychromatic. More precisely, we are 
given a set $\Omega\subset \mathbb R^2$, a function $u\in C^2(\Omega)$ and a $C^1$ field of unit directions 
${\bf m}(x)$. Let $\Gamma=\{(x,u(x)), x\in \Omega\}$ be the graph of $u$ and assume it is surrounded by vacuum. Our goal is to find a $C^1$ function $\phi$ defined in a neighborhood of $\Gamma$ so that for every $x\in \Omega$, each vertical monochromatic ray with wavelength $\lambda_0$ emitted from $x\in \Omega$ is refracted at $(x,u(x))$ into the direction ${\bf m}(x)$; see for example Figure \ref{fig:pic for m(x)}. 
We find in Section \ref{sec:general field of directions} necessary and sufficient conditions on the field ${\bf m}$ and the function $u$ so that the phase discontinuity $\phi$ exists in a neighborhood of $(x_0,u(x_0))$ for each $x_0\in \Omega$, see Theorem \ref{thm:metalens  existence}. 
To do this, using first the generalized Snell law from Section \ref{sec:Snell law} we see that the values $\nabla \phi(x,u(x))$
are determined. We next apply the method from Section \ref{sec:general extension method} to extend $\phi$ to a neighborhood of $(x,u(x))$. This construction is then applied when ${\bf m}$ is a constant field, and when ${\bf m}$ points toward a point $P_0=(0,0,p_0)$, see Sections \ref{subsec:constant}, and \ref{subsec:point target}.
If $\Gamma$ is the graph of any smooth radial function, and ${\bf m}$ points towards $P_0=(0,0,p_0)$, then 
there exists also a phase $\phi$ in a neighborhood of $(0,u(0))$ such that the metalens $(\Gamma,\phi)$ focuses 
all vertical rays with wavelength $\lambda_0$ into $P_0$; see Remark \ref{rmk:radial Gamma existence of phase}.
We then study the behavior of these metalenses when the 
incoming rays have wavelength $\lambda$ instead of $\lambda_0$. We prove that that if $\Gamma$ is a horizontal plane, Section \ref{sec:chromatic 
aberration plane}, or more generally if $\Gamma$ is the graph of a radial function, Section \ref{subsec:aberration for general radial u}, then all vertical rays with wavelength  $\lambda$ 
are focused into the points $P_{\lambda}(x)=(0,0,p_{\lambda}(x))$ depending on $|x|$. We then estimate in both cases the chromatic 
aberration $\left|P_0-P_{\lambda}(x)\right|$; see Theorem \ref{thm:planar case error} for the planar case and  \eqref{eq:order of magnitude for radial case} when $u$ is radial. 

Finally, in Section \ref{sec:standard lens}, we analyze a similar problem with a standard 
lens. 
In this case, it is known that the surfaces separating two media $n_1$ and $n_2$ with $n_1>n_2$ that focus 
vertical radiation in medium $n_1$ with wavelength $\lambda$ into a fixed point in medium $n_2$ are hyperboloids, see \cite{gutierrez-huang:reshaping light beam}. 
It is then shown that the error for chromatic aberration in metalenses for visible wavelengths is of the same order of magnitude as for standard lenses.

Finally, we mention that the phase discontinuity functions needed to design metalenses for various refraction and reflection problems with prescribed distributions of energy satisfy partial differential equations of Monge-Amp\`ere type which are derived and 
studied in \cite{gutierrez-pallucchini:metasurfacesandMAequations}. 
These type of equations appeared also naturally in solving problems involving aspherical lenses, see \cite{gutierrez-huang:reshaping light beam}, \cite{abedin-gutierrez-tralli:regularity-refractors},  \cite{gutierrez-sabra:asphericallensdesignandimaging}, \cite{gutierrez-sabra:freeformgeneralfields}, and references therein.

\section{Generalized Snell's law}\label{sec:Snell law}
The refractive index of a medium corresponding to an electromagnetic wave with frequency $\omega$ is given by $n=\dfrac{c}{v}$, where $c$ is the velocity of the wave in vacuum and $v$ its apparent velocity in the medium. 
$v$ depends on the wave and on the material \cite[Vol. 1, Chap. 31]{feynman-lectures-on-physics}, see formula (31.19) there for $n$ that highlights the dependence of $n$ on $\omega$ and on the material. 
See also \cite[Chap. 3]{Hecht-optics-book} and \cite[Sec. 2.3.4]{book:born-wolf}.
Note that the frequency $\omega$ is a characteristic of the wave and is independent of the material traversed. The wavelength of the wave in the medium is the distance between two consecutive crests and is given by $\lambda=\dfrac{2\pi}{\omega} v$. 
The wave number $k=\dfrac{2\pi}{\lambda}$ is the number of waves per unit distance,
 $\lambda_0$ denotes the wavelength of a wave with the same frequency $\omega$ when traveling in vacuum, and $k_0=\dfrac{2\pi}{\lambda_0}$ is the corresponding wave number. We have $\lambda_0=\dfrac{2\pi c}{\omega}$, then 
\begin{equation}\label{eq:formulas for refractive indices and wavelenghts}
n=\dfrac{c}{v}=\dfrac{2\pi c}{\lambda\omega} =\dfrac{\lambda_0}{\lambda}=\dfrac{k}{k_0}.
\end{equation}

In \cite{gutierrez-pallucchini-stachura:nonflatmetasurfaces}, Snell's law has been generalized for metasurfaces using wave fronts introducing a phase shift along the optical path. 
Using Fermat principle, we next derive here this law in a form needed for our purposes. Assume $\Gamma$ is a surface,   separating media $I$ and $II$, and given by the level set $\psi(x,y,z)=0$, with $\psi$ a $C^1$ function. Assume a phase discontinuity $\phi$ is defined in a neighborhood of the surface $\Gamma$. A light wave with frequency $\omega$ emitted from a point $A$ in medium $I$ strikes $\Gamma$ at the point $P(x,y,z)$ and is then refracted into the point $B$ in medium $II$. 
Let $n_1$ and $n_2$ be the refractive indices of media $I$ and $II$ corresponding to $\omega$, and let $k_1, k_2$ be the corresponding wave numbers.  By Fermat's principle of stationary phase we have that $P(x,y,z)$ is a critical point to the function 
$$k_1\, |P-A|+k_2\,|B-P| -\phi(P),$$
and since $P\in \Gamma$, then we also have that $\psi(P)=0$.
Therefore by Lagrange multipliers, we have that 
$$\nabla\left(k_1\, |P-A|+k_2\,|B-P| -\phi(P)\right)\times \nabla \psi(P)=0.$$
Since $\nabla \psi$ is parallel to the normal $\nu$ to $\Gamma$ at the point $P$, and
$$\nabla\left(k_1\, |P-A|+k_2\,|B-P| -\phi(P)\right)=k_1\dfrac{{P-A}}{|P-A|}-k_2\dfrac{{B-P}}{|B-P|}-\nabla \phi(P),$$
denoting the unit directions of the incident and refracted rays by ${\bf x}=\dfrac{{P-A}}{|P-A|}$ and ${\bf m}=\dfrac{{B-P}}{|B-P|}$, respectively, we then get the following form of Snell's law for the metalens $(\Gamma,\phi)$
\begin{equation}\label{eq:snell law for wave vectors}
\left(k_1{\bf x}-k_2{\bf m}\right)\times \nu=\nabla \phi\times \nu.
\end{equation}
From \eqref{eq:formulas for refractive indices and wavelenghts}, $k_i=n_ik_0$ and so \eqref{eq:snell law for wave vectors} can be re-written as
\begin{equation*}
 k_0(n_1 {\bf x} - n_2 {\bf m})\times \nu=\nabla \phi\times \nu,
 \end{equation*}
 We then obtain
 \begin{equation}\label{eq:SnellLaw}
 \(n_1 {\bf x} - n_2 {\bf m}\)\times \nu=\(\nabla \(\lambda_0\,\phi/2\pi\)\)\times \nu.
 \end{equation} 
 This is clearly equivalent to 
 \begin{equation}\label{eq:SnellLaw bis}
 n_1 {\bf x} - n_2 {\bf m}=\mu\,\nu + \nabla \(\lambda_0\,\phi/2\pi\), 
 \end{equation}
for some $\mu$.
From \cite[Formula (11)]{gutierrez-pallucchini-stachura:nonflatmetasurfaces}, the value of $\mu$ can be calculated:
\begin{equation}\label{eq:formula for mu}
\mu=\left(n_1{\bf x}-\nabla \(\lambda_0\,\phi/2\pi\)\right)\cdot \nu-\sqrt{n_2^2-\left(\left|n_1{\bf x}-\nabla \(\lambda_0\,\phi/2\pi\)\right|^2-\left[\left(n_1{\bf x}-\nabla \(\lambda_0\,\phi/2\pi\)\right)\cdot \nu\right]^2\right)}\,.
\end{equation}
In order to avoid total internal reflection, the term inside the square root in \eqref{eq:formula for mu} must be non-negative, that is,
\begin{equation}\label{eq:internal reflection}
\left[\left({\bf x}-\dfrac{\nabla\(\lambda_0\,\phi/2\pi\)}{n_1}\right)\cdot \nu\right]^2\geq \left|{\bf x}-\dfrac{\nabla\(\lambda_0\,\phi/2\pi\)}{n_1}\right|^2-\left(\dfrac{n_2}{n_1}\right)^2.
\end{equation}

\section{Construction of gradient fields using the notion of envelope}\label{sec:general extension method}
\setcounter{equation}{0}
Let $\Omega\subset \R^2$ be an open simply connected domain, $u\in C^2(\Omega)$, and 
let $\Gamma$ be the graph of $u$, i.e., $\Gamma=\{(x,u(x)):x\in \Omega\}$; $x=(x_1,x_2)$.
A field $V=(V_1,V_2,V_3):\Gamma\to \R^3$ is given on $\Gamma$ such that $V(x,u(x))\in C^1(\Omega)$. 
Here we answer the following question: When is it possible to have a $C^1$ function $\phi$ defined in a neighborhood $\Gamma$ such that $\nabla \phi=V$ on $\Gamma$?

Suppose such a $\phi$ exists, that is, $\nabla \phi(x,u(x))=V(x,u(x))$, $x=(x_1,x_2)\in \Omega$, and consider $f(x)=\phi(x,u(x))$. Since $\phi=\phi(x_1,x_2,x_3)$ is differentiable and defined in a neighborhood of $\Gamma$, $f_{x_i}(x)=\phi_{x_i}(x,u(x))+ \phi_{x_3}(x,u(x))\,u_{x_i}(x)
=
V_i(x,u(x))+V_3(x,u(x))\,u_{x_i}(x)$, $i=1,2$, so
$f$ satisfies the system
\begin{equation}\label{eq:system of differential equations phi along Gamma}
\begin{cases}
f_{x_1}(x)
=
V_1(x,u(x))+V_3(x,u(x))\,u_{x_1}(x)\\
f_{x_2}(x)
=
V_2(x,u(x))+V_3(x,u(x))\,u_{x_2}(x),
\end{cases}
\end{equation}
for $x\in \Omega$.
Since $V(x,u(x))$ is $C^1$ and $u\in C^2$, it follows that $f\in C^2$ and so $f_{x_1x_2}(x)=f_{x_2x_1}(x)$ which is equivalent to
\begin{equation}\label{eq:necessary condition of field psi}
\partial_{x_2}\(V_1(x,u(x))+V_3(x,u(x))\,u_{x_1}(x)\)
=
\partial_{x_1}\(V_2(x,u(x))+V_3(x,u(x))\,u_{x_2}(x)\),\quad x\in \Omega.
\end{equation}
Therefore, if such a $\phi$ exists, then $V$ must satisfy \eqref{eq:necessary condition of field psi}.

Vice versa, if we are now given a field $V$ defined on $\Gamma$ and satisfying \eqref{eq:necessary condition of field psi},
we want to construct $\phi$ in a neighborhood of $\Gamma$ so that $\nabla \phi=V$ along $\Gamma$. Actually, we will give a sufficient condition on the field $V$ so that $\phi$ exists in a neighborhood of a point $\(x_0,u(x_0)\)\in \Gamma$.
In order to do this, we will apply the implicit function theorem, for which we assume  $V(x,u(x))$ to be $C^2$ in a neighborhood of $x_0$.
Indeed, if $V$ satisfies \eqref{eq:necessary condition of field psi} in a ball $B=B_\delta(x_0)\subset \Omega$, then the system \eqref{eq:system of differential equations phi along Gamma} is solvable in $B$, and let $f$ be a solution\footnote{There are many solutions to \eqref{eq:system of differential equations phi along Gamma}. However, given a point $x_0\in \Omega$ and a number $y_0\in \R$, there is a unique solution $f$ satisfying \eqref{eq:system of differential equations phi along Gamma} with $f(x_0)=y_0$.}.
We then have
\begin{align}
V(x,u(x))\cdot (1,0,u_{x_1}(x))&=\partial_{x_1}f(x)\label{eq:compatibility cond x}\\
V(x,u(x))\cdot (0,1,u_{x_2}(x))&=\partial_{x_2}f(x).\label{eq:compatibility cond y}
\end{align}
We seek a function $\phi$ defined in a neighborhood of the point $(x_0,u(x_0))$ so that 
$\phi(x,u(x))=f(x)$, and $\nabla  \phi(x,u(x))=V(x,u(x))$ in a neighborhood of $x_0$.

Let us then define $\mathcal S$ to be the surface in $\R^4$ given by the vector 
$$P(x)=(x,u(x),f(x)),\qquad\qquad x\in B.$$
At each point $P(x)$ consider the vector
$$N(x)=\left(V(x,u(x)),-1\right),$$
and the plane $\Pi_{x}$ passing through $P(x)$ with normal $N(x)$, that is,
$$\Pi_x=\{y=(y_1,y_2,y_3,y_4)\in \R^4: \left(y-P(x)\right)\cdot N(x)=0\}.$$
For $x\in B$ and $y\in \R^4$ define the function 
\[
F(y,x)=\left(y-P(x)\right)\cdot N(x)
\] 
and the map 

\begin{equation}\label{def:map G}
G(y,x)=\left(F(y,x), \dfrac{\partial F}{\partial x_1}(y,x) ,\dfrac{\partial F}{\partial x_2}(y,x) \right).
\end{equation}
Consider then the system of equations
\begin{equation}\label{eq:System of F near field}
G(y,x)=(0,0,0).
\end{equation}
We will construct $\phi$ by solving \eqref{eq:System of F near field} using the implicit function theorem.

\begin{claim}\label{clm:solution to the system near field} 
For each $x\in B$, the vector $(P(x),x)$ solves the system \eqref{eq:System of F near field}.
\end{claim}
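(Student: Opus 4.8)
The plan is to verify directly that each of the three coordinates of $G(P(x),x)$ vanishes, using only the definitions of $P$, $N$, $F$ in \eqref{def:map G} together with the fact that $f$ solves the system \eqref{eq:system of differential equations phi along Gamma}. The zeroth coordinate is immediate: by definition $F(P(x),x)=\left(P(x)-P(x)\right)\cdot N(x)=0$.

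For the remaining two coordinates I would differentiate $F(y,x)=\left(y-P(x)\right)\cdot N(x)$ with respect to $x_i$, $i=1,2$, keeping $y$ fixed as an independent variable; by the product rule
\[
\frac{\partial F}{\partial x_i}(y,x)=-P_{x_i}(x)\cdot N(x)+\left(y-P(x)\right)\cdot N_{x_i}(x).
\]
Substituting $y=P(x)$ annihilates the second term, leaving $\dfrac{\partial F}{\partial x_i}(P(x),x)=-P_{x_i}(x)\cdot N(x)$, so it suffices to check that this inner product is zero. Since $P(x)=(x_1,x_2,u(x),f(x))$ and $N(x)=\left(V(x,u(x)),-1\right)$, one has $P_{x_1}(x)=(1,0,u_{x_1}(x),f_{x_1}(x))$ and hence
\[
P_{x_1}(x)\cdot N(x)=V_1(x,u(x))+V_3(x,u(x))\,u_{x_1}(x)-f_{x_1}(x),
\]
which vanishes precisely because $f$ satisfies the first equation of \eqref{eq:system of differential equations phi along Gamma} (equivalently, this is \eqref{eq:compatibility cond x}). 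The identical computation with $x_2$ replacing $x_1$ uses the second equation of \eqref{eq:system of differential equations phi along Gamma} (equivalently \eqref{eq:compatibility cond y}). This establishes $G(P(x),x)=(0,0,0)$ for every $x\in B$.

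There is essentially no obstacle here: the substance of the claim is just the observation that the PDE system defining $f$ is exactly the statement that the tangent vectors $P_{x_1}(x),P_{x_2}(x)$ to the surface $\mathcal S$ are orthogonal to $N(x)$ — i.e. that $\Pi_x$ is the tangent plane to $\mathcal S$ at $P(x)$ — together with the trivial fact that a point lies on its own tangent plane. The only mild care needed is to differentiate $F$ with $y$ held fixed \emph{before} evaluating at $y=P(x)$, rather than substituting first.
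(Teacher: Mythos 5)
Your proof is correct and follows essentially the same route as the paper: the zeroth coordinate vanishes by definition, and for the other two you compute $\partial F/\partial x_i(y,x)=-P_{x_i}\cdot N+(y-P)\cdot N_{x_i}$, evaluate at $y=P(x)$, and use the system \eqref{eq:system of differential equations phi along Gamma} (i.e.\ \eqref{eq:compatibility cond x}--\eqref{eq:compatibility cond y}) to see that $P_{x_i}(x)\cdot N(x)=0$. The only difference is the order of presentation (the paper checks the orthogonality first, you differentiate first), which is immaterial.
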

\begin{proof}
Clearly, from the definition of $F$, $F(P(x),x)=0$.
From \eqref{eq:compatibility cond x}
\begin{align*}
P_{x_1}(x)\cdot N(x)
&= 
\(1,0,u_{x_1}(x),\partial_{x_1}f(x)\)
\cdot
\left(V(x,u(x)),-1\right)=0.
\end{align*}
Hence for each $(y,x)\in \R^4\times  B$ 
\begin{align*}
\dfrac{\partial F}{\partial x_1}&=-P_{x_1}(x)\cdot N(x)+(y-P(x))\cdot N_{x_1}(x)
=(y-P(x))\cdot N_{x_1}(x),
\end{align*}
and so  
$$\dfrac{\partial F}{\partial x_1}(P(x),x)=0.$$
Similarly, from \eqref{eq:compatibility cond y} $P_{x_2}(x)\cdot N(x)=0$ so we get $\dfrac{\partial F}{\partial x_2}(P(x),x)=0$, and the claim follows.
\end{proof}

Consider the point $\(P(x_0),x_0\)=\(x_0,u(x_0),f(x_0),x_0\).$ 
From Claim \ref{clm:solution to the system near field}, $G\(P(x_0),x_0\)=(0,0,0)$. 
From the differentiability assumptions on $u$ and $V(x,u(x))$,
$G$ has continuous first order partial derivatives in a neighborhood of $(P(x_0),x_0)$.  If the Jacobian determinant
\begin{equation}\label{eq:Jacobian different from zero}
\dfrac{\partial G}{\partial (y_4,x_1,x_2)}\(P(x_0),x_0\)\neq 0,
\end{equation}
then the implicit function theorem implies that 
there exist an open neighborhood $O\subseteq B\times \R$ of $\(x_0,u(x_0)\)$, an open neighborhood $W\subseteq \R\times  B$  of $(f(x_0),x_0)$, and unique $C^1$ functions $g_1,g_2,g_3:O\to W$ 
solving
\begin{equation}\label{eq:equation for the implicit function theorem}
G(y_1,y_2,y_3,g_1(y_1,y_2,y_3),g_2(y_1,y_2,y_3),g_3(y_1,y_2,y_3))=(0,0,0),
\end{equation}
for all $(y_1,y_2,y_3)\in O$.

We shall prove that the desired $\phi$ is $\phi=g_1$.
Indeed, from the continuity of $u(x)$ and $f(x)$ in $B$ there exists a neighborhood $U\subseteq B$ of $x_0$  such that for every $x\in U$, we have $(x,u(x))\in O$, and $(f(x),x)\in W$. Hence, for $x\in U$ we have $(P(x),x)\in O\times  W$, and from Claim \ref{clm:solution to the system near field} $G(P(x),x)=(0,0,0)$. 
Therefore, from the uniqueness of the functions $g_1,g_2,g_3$, it follows that $g_1(x,u(x))=f(x)$, $g_2(x,u(x))=x_1, g_3(x,u(x))=x_2$, for every $x=(x_1,x_2)\in U$. 
Then, 
it remains to show  that $\nabla g_1(x,u(x))=V(x,u(x))$ for $x\in U$, i.e., along $\Gamma$.
In fact, since for $(y_1,y_2,y_3)\in O$  
\begin{equation}\label{eq:identity for F}
F(y_1,y_2,y_3,g_1(y_1,y_2,y_3),g_2(y_1,y_2,y_3),g_3(y_1,y_2,y_3))=0,
\end{equation} 
differentiating this identity with respect to $y_1$ yields
\begin{equation}\label{eq:derivative wrt to x bis}
0=F_{y_1}+F_{y_4}(g_1)_{y_1}+F_{x_1}(g_2)_{y_1}+F_{x_2}(g_3)_{y_1}.
\end{equation}
From \eqref{eq:equation for the implicit function theorem}, $F_{x_1}=F_{x_2}=0$ at $\(y_1,y_2,y_3,g_1(y_1,y_2,y_3),g_2(y_1,y_2,y_3),g_3(y_1,y_2,y_3)\)$. 
Moreover for each $(y,x)\in \R^4\times B$, we have
\begin{align*}
F_{y_1}(y,x)&=(1,0,0,0)\cdot N(x)=V_1(x,u(x))\\
F_{y_4}(y,x)&=(0,0,0,1)\cdot N(x)=-1.
\end{align*}
Since $g_2(x,u(x))=x_1$ and $g_3(x,u(x))=x_2$ for $x=(x_1,x_2)\in U$, substituting the obtained values of $F_{x_1},F_{x_2},F_{y_1},F_{y_4}$ in \eqref{eq:derivative wrt to x bis} yields
$$V_1(x,u(x))=(g_1)_{y_1}(x,u(x)),\quad x\in U.$$
Differentiating \eqref{eq:identity for F} with respect to $y_2$ and with respect to $y_3$, and proceeding similarly, yields 
$$V_2(x,u(x))=(g_1)_{y_2}(x,u(x)),\quad V_3(x,u(x))=(g_1)_{y_3}(x,u(x)).$$
Hence $\nabla g_1(x,u(x))=V(x,u(x))$ for $x\in U$.

We have then proved the following theorem.

\begin{theorem}\label{thm:existence of phase discontinuitiy functions}
Let $x_0\in \Omega$, $u\in C^2$ in a neighborhood of $x_0$, and let $V(x,u(x))$ be a field that is $C^2$ in a neighborhood of $x_0$ satisfying \eqref{eq:necessary condition of field psi} in that neighborhood.
If \eqref{eq:Jacobian different from zero} holds, where $G$ is defined by \eqref{def:map G}, then there exists a function $\phi$ defined in a neighborhood of $(x_0,u(x_0))$ such that $V=\nabla  \phi$ in that neighborhood.

\end{theorem}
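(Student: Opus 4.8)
The plan is to obtain $\phi$ as the function whose graph over the $(y_1,y_2,y_3)$--space in $\R^4$ is the envelope of the two--parameter family of hyperplanes $\Pi_x$ attached to the lifted surface $\mathcal S=\{P(x)=(x,u(x),f(x))\}$, and then to extract both the existence of $\phi$ and the identity $\nabla\phi=V$ on $\Gamma$ from the implicit function theorem applied to the envelope system \eqref{eq:System of F near field}.

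First I would produce the trace function $f$ on a ball $B=B_\delta(x_0)\subset\Omega$: since $V$ satisfies the integrability condition \eqref{eq:necessary condition of field psi} on $B$, the right--hand sides of \eqref{eq:system of differential equations phi along Gamma} define a closed $1$--form on the simply connected set $B$, so \eqref{eq:system of differential equations phi along Gamma} has a solution $f\in C^2(B)$ (unique once $f(x_0)$ is fixed). This yields the compatibility relations \eqref{eq:compatibility cond x}--\eqref{eq:compatibility cond y}, which say precisely that $N(x)=(V(x,u(x)),-1)$ is orthogonal to the tangent vectors $P_{x_1}(x),P_{x_2}(x)$ of $\mathcal S$; equivalently, $\Pi_x$ is tangent to $\mathcal S$ at $P(x)$. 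The envelope of this family is cut out by $F(y,x)=0$ together with $\partial_{x_1}F=\partial_{x_2}F=0$, i.e. by $G(y,x)=(0,0,0)$ for $G$ as in \eqref{def:map G}, and Claim \ref{clm:solution to the system near field} (whose proof uses exactly \eqref{eq:compatibility cond x}--\eqref{eq:compatibility cond y}) shows that $(P(x),x)$ lies on this set for all $x\in B$; in particular $G(P(x_0),x_0)=(0,0,0)$.

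Next I would run the implicit function theorem at $(P(x_0),x_0)$. Because $G$ involves the first $x$--derivatives of $F$, hence the first derivatives of $P$ and $N$, the hypotheses $u\in C^2$ and $V(x,u(x))\in C^2$ near $x_0$ make $G$ of class $C^1$ near $(P(x_0),x_0)$; combined with the standing Jacobian hypothesis \eqref{eq:Jacobian different from zero} and $G(P(x_0),x_0)=0$, this lets one solve \eqref{eq:System of F near field} for the three variables $(y_4,x_1,x_2)$ as unique $C^1$ functions $g_1,g_2,g_3$ of $(y_1,y_2,y_3)$ on a neighborhood $O$ of $(x_0,u(x_0))$. The uniqueness clause together with Claim \ref{clm:solution to the system near field} forces $g_1(x,u(x))=f(x)$, $g_2(x,u(x))=x_1$, $g_3(x,u(x))=x_2$ for $x$ near $x_0$. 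Setting $\phi:=g_1$, I would differentiate the identity $F(y_1,y_2,y_3,g_1,g_2,g_3)\equiv 0$ in $y_1,y_2,y_3$; using that $F_{x_1}=F_{x_2}=0$ on the solution set (this is part of $G=0$), the elementary values $F_{y_j}=V_j(x,u(x))$ and $F_{y_4}=-1$, and the relations $g_2(x,u(x))=x_1$, $g_3(x,u(x))=x_2$, one gets $(g_1)_{y_j}(x,u(x))=V_j(x,u(x))$ for $j=1,2,3$, i.e. $\nabla\phi=V$ along $\Gamma$.

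The genuinely delicate point is the bookkeeping in the envelope/implicit--function step rather than any hard estimate: verifying that $G$ is $C^1$ (which is what forces the $C^2$ hypothesis on $V(x,u(x))$ rather than merely $C^1$), that exactly the three variables $(y_4,x_1,x_2)$ are being solved for so that \eqref{eq:Jacobian different from zero} is the right nondegeneracy condition, and that the implicit function theorem's uniqueness can legitimately be combined with Claim \ref{clm:solution to the system near field} to identify $g_1,g_2,g_3$ along $\Gamma$. The integrability step and the concluding differentiation are routine once this framework is set up; the conceptual heart is recognizing that ``$V$ is a gradient on $\Gamma$ that extends to a neighborhood'' is equivalent to ``the tangent hyperplanes $\Pi_x$ form an envelope--generating family,'' which is exactly what makes \eqref{eq:Jacobian different from zero} the natural hypothesis.
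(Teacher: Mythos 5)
Your proposal is correct and follows essentially the same route as the paper: solve \eqref{eq:system of differential equations phi along Gamma} for $f$ using the integrability condition on the simply connected ball, form the lifted surface $P(x)$ and the tangent hyperplanes with normals $N(x)=(V,-1)$, apply the implicit function theorem to the envelope system $G=0$ in the variables $(y_4,x_1,x_2)$ under \eqref{eq:Jacobian different from zero}, identify $g_1,g_2,g_3$ along $\Gamma$ via uniqueness and Claim \ref{clm:solution to the system near field}, and differentiate $F\equiv 0$ to get $\nabla g_1=V$ on $\Gamma$. No substantive differences to report.
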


For our application, let us re write condition \eqref{eq:Jacobian different from zero} in simpler terms.
We have
$$
\dfrac{\partial G}{\partial (y_4,x_1,x_2)}\(P(x_0),x_0\)=\left|\begin{matrix}
    \dfrac{\partial F}{\partial y_4}      & \dfrac{\partial^2 F}{\partial x_1 \partial y_4}  & \dfrac{\partial^2 F}{\partial x_2\partial y_4}  \\
    \dfrac{\partial F}{\partial x_1}& \dfrac{\partial^2 F}{\partial^2 x_1}  & \dfrac{\partial^2 F}{\partial x_2\partial x_1} \\
    \dfrac{\partial F}{\partial x_2}      & \dfrac{\partial^2 F}{\partial x_1 \partial x_2}&  \dfrac{\partial^2 F}{\partial^2 x_2}
\end{matrix}
\right|_{(P(x_0),x_0)}.
$$
Since $F_{y_4}(y,x)=-1$,  
$$
\dfrac{\partial G}{\partial (y_4,x_1,x_2)}\(P(x_0),x_0\)=\left|\begin{matrix}
    -1      & 0  & 0  \\
    F_{x_1}& F_{x_1x_1}  & F_{x_1x_2} \\
    F_{x_2}      & F_{x_2x_1}&  F_{x_2x_2}
\end{matrix}
\right|_{\(P(x_0),x_0\)}
=-\left|\begin{matrix}
   F_{x_1x_1}  & F_{x_1x_2} \\
   F_{x_2x_1}&  F_{x_2x_2}
\end{matrix}
\right|_{\(P(x_0),x_0\)}.
$$
From the proof of Claim \ref{clm:solution to the system near field}, $F_{x_i}(y,x)=(y-P(x))\cdot N_{x_i}(x)$, $i=1,2$, 
and differentiating yields $F_{x_ix_j}(y,x)=-P_{x_j}\cdot N_{x_i}+(y-P)\cdot N_{x_ix_j}$.
Therefore \eqref{eq:Jacobian different from zero} becomes
\begin{equation}\label{eq:determinant simpler writing}
\dfrac{\partial G}{\partial (y_4,x_1,x_2)}\(P(x_0),x_0\)=-\left|\begin{matrix}
   P_{x_1}(x_0)\cdot N_{x_1}(x_0)  & P_{x_1}(x_0)\cdot N_{x_2}(x_0) \\
   P_{x_2}(x_0)\cdot N_{x_1}(x_0)&  P_{x_2}(x_0)\cdot N_{x_2}(x_0)
\end{matrix}\right|\neq 0.
\end{equation}

\section{Existence of phase discontinuities focusing a collimated beam into a given set of directions}\label{sec:general field of directions}
\setcounter{equation}{0}
We are given a $C^1(\Omega)$ unit field with non negative vertical component,
$${\bf m}(x)= (m_1(x),m_2(x),m_3(x)),\quad x\in \Omega, \quad |{\bf m}(x)|=1\quad ,m_3(x)\geq 0,
$$
corresponding to the set of directions where we want the radiation to be steered.
Assume the medium surrounding $\Gamma$ is vacuum; $\Gamma$ is the graph of a function $u\in C^2(\Omega)$. We consider a vertical parallel beam of monochromatic light 
of wavelength $\lambda_0$. From each $x\in \Omega$ a ray with direction ${\bf e}
=(0,0,1)$ strikes $\Gamma$ at the point $(x,u(x))$. 
The goal of this section is to apply the results from Section \ref{sec:general extension method} to show existence of a phase  
discontinuity $\phi\in C^1$ defined in a neighborhood of $\Gamma$ so that $\nabla \phi$ is tangential to $\Gamma$, and for each $x\in \Omega$, the vertical ray 
with direction ${\bf e}$ is refracted by the metalens $(\Gamma,\phi)$ at $(x,u(x))$ into the direction ${\bf m}(x)$; see Figure \ref{fig:pic for m(x)}.

Let $\nu:=\nu(x)=\dfrac{(-\nabla u,1)}{\sqrt{1+|\nabla u|^2}}$ be the outer unit normal to $\Gamma$ at each point $\(x,u(x)\)$. Since $m_3\geq 0$, 
it follows that $m(x)\cdot \nu\geq 0.$ 
A tangential phase discontinuity $\phi$ means $\nabla \phi(x,u(x))\cdot \nu=0$ for each $x\in \Omega$. 

We shall prove the following theorem.
\begin{theorem}\label{thm:metalens  existence}
If a tangential phase discontinuity $\phi$ solving the problem above exists, then 
\begin{equation}\label{eq:nabla phi general surface}
\nabla \phi(x,u(x))=\dfrac{2\pi}{\lambda_0}\({\bf e}-{\bf m}-\(\({\bf e}-{\bf m}\)\cdot \nu\)\,\nu \):=V(x,u(x)),
\end{equation}
and ${\bf m}$ satisfies the condition
\begin{equation}\label{eq:curl condition on m}
\left({\bf m}(x)\cdot (1,0,u_{x_1})\right)_{x_2}=\left({\bf m}(x)\cdot (0,1,u_{x_2})\right)_{x_1},
\end{equation}
for $x\in \Omega$.

Conversely, suppose ${\bf m}$ is $C^2$ and satisfies \eqref{eq:curl condition on m}, 
let $h\in C^2(\Omega)$ satisfying 
\begin{equation}\label{eq:gradient of h in terms of m}
\nabla h(x)=\left({\bf m}(x)\cdot (1,0,u_{x_1}),{\bf m}(x)\cdot (0,1,u_{x_2})\right),
\end{equation}
and let $V(x,u(x))$ be given by \eqref{eq:nabla phi general surface} with $u\in C^3(\Omega)$.
If
\begin{equation}\label{eq:determinant condition}
\det\left( D^2h-\left\{{\bf e}\cdot {\bf m}+\dfrac{({\bf e}-{\bf m})\cdot \nu}{\sqrt{1+u_x^2+u_y^2}}\right\}D^2u\right)\neq 0,
\end{equation}
at some $x_0\in \Omega$, then there exist a neighborhood $W$ of $\(x_0,u(x_0)\)$ and a function $\phi$ defined in $W$  
satisfying 
\begin{equation}\label{eq:Formula for phi general surface}
\phi(x,u(x))=\dfrac{2\pi}{\lambda_0}\left(u(x)-h(x)\right)+C,
\end{equation}
for some constant $C$, with
$\nabla \phi(x,u(x))=V(x,u(x))$ for $x$ in a neighborhood of $x_0$. 
In addition, 
for each $\(x,u(x)\)\in W$ the metalens $(\Gamma,\phi)$ refracts the ray with direction ${\bf e}$ striking $\Gamma$ at $(x,u(x))$ into the direction ${\bf m}(x)$.
\end{theorem}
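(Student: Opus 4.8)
The plan is to split the proof into the necessity and sufficiency directions, and in each case reduce everything to the generalized Snell's law \eqref{eq:SnellLaw bis} and the machinery of Section \ref{sec:general extension method}.

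\textbf{Necessity.} Assume a tangential phase discontinuity $\phi$ solving the problem exists. First I would apply \eqref{eq:SnellLaw bis} with $n_1=n_2=1$ (vacuum), incident direction ${\bf x}={\bf e}$, and refracted direction ${\bf m}$, obtaining ${\bf e}-{\bf m}=\mu\,\nu+\nabla(\lambda_0\,\phi/2\pi)$ at each point $(x,u(x))$. Dotting with $\nu$ and using the tangency hypothesis $\nabla\phi\cdot\nu=0$ gives $\mu=({\bf e}-{\bf m})\cdot\nu$, and substituting back isolates $\nabla(\lambda_0\phi/2\pi)=({\bf e}-{\bf m})-(({\bf e}-{\bf m})\cdot\nu)\nu$, which is exactly \eqref{eq:nabla phi general surface}. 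Next, since $\phi$ is a genuine $C^1$ function in a neighborhood of $\Gamma$, the field $V=\nabla\phi$ satisfies the compatibility condition \eqref{eq:necessary condition of field psi} from Section \ref{sec:general extension method}. I would then show that for this particular $V$ the condition \eqref{eq:necessary condition of field psi} simplifies to \eqref{eq:curl condition on m}: write $V(x,u(x))\cdot(1,0,u_{x_1})$ and $V(x,u(x))\cdot(0,1,u_{x_2})$ using \eqref{eq:nabla phi general surface}, observe that the normal-component terms $(({\bf e}-{\bf m})\cdot\nu)\nu$ contribute $\nu\cdot(1,0,u_{x_1})=0$ and $\nu\cdot(0,1,u_{x_2})=0$ because $(1,0,u_{x_1})$ and $(0,1,u_{x_2})$ are tangent to $\Gamma$, and that ${\bf e}\cdot(1,0,u_{x_1})=u_{x_1}$, ${\bf e}\cdot(0,1,u_{x_2})=u_{x_2}$ so the ${\bf e}$-contribution is $\partial_{x_2}u_{x_1}=\partial_{x_1}u_{x_2}$, which cancels. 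What remains is precisely the symmetry of $\partial_{x_2}(-{\bf m}\cdot(1,0,u_{x_1}))$ and $\partial_{x_1}(-{\bf m}\cdot(0,1,u_{x_2}))$, i.e.\ \eqref{eq:curl condition on m}.

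\textbf{Sufficiency.} Now assume ${\bf m}\in C^2$ satisfies \eqref{eq:curl condition on m}, so that the right-hand side of \eqref{eq:gradient of h in terms of m} is a curl-free field on the simply connected domain $\Omega$ and hence admits a potential $h\in C^2(\Omega)$. Define $V(x,u(x))$ by \eqref{eq:nabla phi general surface}; the same computation as above, run in reverse, shows that $V$ satisfies the compatibility condition \eqref{eq:necessary condition of field psi}, with the associated one-variable system \eqref{eq:system of differential equations phi along Gamma} solved by $f(x)=\frac{2\pi}{\lambda_0}(u(x)-h(x))+C$; indeed $f_{x_i}=\frac{2\pi}{\lambda_0}(u_{x_i}-h_{x_i})$ and one checks term-by-term that this equals $V_i(x,u(x))+V_3(x,u(x))u_{x_i}(x)$ using \eqref{eq:gradient of h in terms of m} and $\nu\cdot(1,0,u_{x_1})=\nu\cdot(0,1,u_{x_2})=0$. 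With $u\in C^3$ the field $V(x,u(x))$ is $C^2$, so Theorem \ref{thm:existence of phase discontinuitiy functions} applies provided the Jacobian condition \eqref{eq:Jacobian different from zero}, equivalently \eqref{eq:determinant simpler writing}, holds at $x_0$. I would compute the $2\times2$ matrix in \eqref{eq:determinant simpler writing} with $P(x)=(x,u(x),f(x))$ and $N(x)=(V(x,u(x)),-1)$: since the surface $\mathcal S$ and the normal field $N$ are built from $u$ and the scalar $\frac{2\pi}{\lambda_0}({\bf e}\cdot{\bf m}+({\bf e}-{\bf m})\cdot\nu/\sqrt{1+|\nabla u|^2})$ times second derivatives of $u$, together with $D^2 h$, the determinant collapses (up to a nonzero constant factor $(2\pi/\lambda_0)^2$ and sign) to $\det\big(D^2h-\{{\bf e}\cdot{\bf m}+({\bf e}-{\bf m})\cdot\nu/\sqrt{1+u_x^2+u_y^2}\}D^2u\big)$, which is \eqref{eq:determinant condition}. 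Then Theorem \ref{thm:existence of phase discontinuitiy functions} produces $\phi=g_1$ in a neighborhood $W$ of $(x_0,u(x_0))$ with $\nabla\phi(x,u(x))=V(x,u(x))$ and $\phi(x,u(x))=f(x)$, which is \eqref{eq:Formula for phi general surface}. Finally, because $\nabla\phi=V$ is tangential by construction (the normal component of \eqref{eq:nabla phi general surface} vanishes) and satisfies \eqref{eq:SnellLaw bis} with $\mu=({\bf e}-{\bf m})\cdot\nu$, the metalens $(\Gamma,\phi)$ refracts ${\bf e}$ into ${\bf m}(x)$ at each point; one should also note $\mu=({\bf e}-{\bf m})\cdot\nu\geq0$ since $m_3\geq0$, consistent with \eqref{eq:formula for mu} and the no-total-internal-reflection condition.

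\textbf{Main obstacle.} The computational heart, and the step most likely to be delicate, is verifying that the abstract Jacobian \eqref{eq:determinant simpler writing} reduces exactly to the clean determinant \eqref{eq:determinant condition}: this requires differentiating $N(x)=(V(x,u(x)),-1)$ with $V$ given by the somewhat involved expression \eqref{eq:nabla phi general surface}, keeping careful track of how derivatives of $\nu$ and of $1/\sqrt{1+|\nabla u|^2}$ enter, and showing that all the first-derivative-of-${\bf m}$ terms organize themselves into $D^2 h$ via \eqref{eq:gradient of h in terms of m} while the remaining terms assemble into the scalar multiple of $D^2 u$. The bookkeeping is where the real work lies; the conceptual structure (Snell's law $\Rightarrow$ formula for $\nabla\phi$; compatibility $\Leftrightarrow$ curl condition on ${\bf m}$; extension via Theorem \ref{thm:existence of phase discontinuitiy functions}) is otherwise a direct application of the preceding sections.
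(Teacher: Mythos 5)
Your proposal follows the paper's proof essentially step for step: necessity via the generalized Snell law plus the compatibility condition \eqref{eq:necessary condition of field psi} reducing to \eqref{eq:curl condition on m}; sufficiency by feeding $V$ from \eqref{eq:nabla phi general surface} and $f=\tfrac{2\pi}{\lambda_0}(u-h)$ into Theorem \ref{thm:existence of phase discontinuitiy functions} and identifying the Jacobian \eqref{eq:determinant simpler writing} with $-\left(2\pi/\lambda_0\right)^2$ times \eqref{eq:determinant condition}. The only stylistic difference is that you derive \eqref{eq:nabla phi general surface} by dotting \eqref{eq:SnellLaw bis} with $\nu$ rather than by the double cross product with the BAC--CAB identity; these are equivalent. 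The Jacobian reduction you leave as ``bookkeeping'' is carried out in the paper exactly along the lines you describe, the key identity being ${\bf m}_{x_i}\cdot(1,0,u_{x_1})=\left({\bf m}\cdot(1,0,u_{x_1})\right)_{x_i}-({\bf e}\cdot{\bf m})\,u_{x_ix_1}=h_{x_ix_1}-({\bf e}\cdot{\bf m})\,u_{x_ix_1}$ together with $\nu\cdot(1,0,u_{x_1})=0$, so that $(1,0,u_{x_1})\cdot\nu_{x_i}=-u_{x_ix_1}/\sqrt{1+|\nabla u|^2}$.

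The one genuine flaw is your last sentence. The absence of total internal reflection is not a matter of the sign of $\mu$, and the claim that $\mu=({\bf e}-{\bf m})\cdot\nu\geq 0$ because $m_3\geq 0$ is false: ${\bf e}\cdot\nu=1/\sqrt{1+|\nabla u|^2}$ can be strictly less than ${\bf m}\cdot\nu$ (e.g.\ ${\bf m}=\nu$ on a tilted surface), so $\mu$ may well be negative. What must be checked is that the radicand in \eqref{eq:formula for mu} is non-negative, i.e.\ condition \eqref{eq:internal refraction bis}. The paper verifies this by the short computation
\begin{equation*}
\left|{\bf e}-\nabla\(\lambda_0\phi/2\pi\)\right|^2-1=\left|{\bf m}+\(({\bf e}-{\bf m})\cdot\nu\)\nu\right|^2-1=({\bf e}\cdot\nu)^2-({\bf m}\cdot\nu)^2\leq ({\bf e}\cdot\nu)^2,
\end{equation*}
which gives \eqref{eq:internal refraction bis} precisely because $\nabla\phi\cdot\nu=0$. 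You should replace your appeal to the sign of $\mu$ with this (or an equivalent) verification; without it the final claim that $(\Gamma,\phi)$ actually refracts ${\bf e}$ into ${\bf m}$ is not justified.
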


\begin{proof}
To prove the first part of the theorem,
from the Snell law \eqref{eq:SnellLaw} with $n_1=n_2=1$, the phase must satisfy 
\[
\({\bf e}-{\bf m}-\(\lambda_0/2\pi\)\nabla \phi\)\times \nu=0,
\]
at all points $(x,u(x))$.
Let us calculate $\nabla \phi$ from this expression. 
Taking cross product again with $\nu$ yields
\[
\nu\times \(\({\bf e}-{\bf m}-\(\lambda_0/2\pi\)\nabla \phi\)\times \nu\)=0.
\]
Since $\phi$ is tangential to $\Gamma$, $\nabla \phi\cdot \nu=0$, $\nu\cdot \nu=1$,  and for all vectors $a,b,c$ we have $a\times (b\times c)=b\,(a\cdot c)-c\,(a\cdot b)$, we obtain \eqref{eq:nabla phi general surface}.
Hence $V\cdot \(1,0,u_{x_1}\)=\dfrac{2\pi}{\lambda_0}\(-{\bf m}\cdot \(1,0,u_{x_1}\)+u_{x_1}\)$ and 
$V\cdot \(0,1,u_{x_2}\)=\dfrac{2\pi}{\lambda_0}\(-{\bf m}\cdot \(0,1,u_{x_2}\)+u_{x_2}\)$.
Therefore, from the existence of $\phi$,  
\eqref{eq:necessary condition of field psi} holds and consequently \eqref{eq:curl condition on m} also.

To prove the converse, we apply Theorem \ref{thm:existence of phase discontinuitiy functions} to $V$ given in \eqref{eq:nabla phi general surface}. Since $\bf m$ is $C^2$ and $u$ is $C^3$, $V$ is $C^2$.
That $V$ satisfies condition \eqref{eq:necessary condition of field psi} follows from \eqref{eq:curl condition on m}.
Let us verify \eqref{eq:Jacobian different from zero}.
Let $f=\dfrac{2\pi}{\lambda_0}\(u-h\)$, so $f$ solves \eqref{eq:system of differential equations phi along Gamma}. 
From the definition of $G$ in \eqref{def:map G}, the vectors $P$ and $N$ are
\begin{align*}
P(x)=\(x,u(x),f(x)\);\qquad N(x)=\(V(x,u(x)),-1\),
\end{align*}
so to show \eqref{eq:Jacobian different from zero} it is enough to show \eqref{eq:determinant simpler writing}.
We have  
\begin{align*}
P_{x_1}\cdot N_{x_1}
&=-\dfrac{2\pi}{\lambda_0}(1,0,u_{x_1},f_{x_1})\cdot \left({\bf m}_{x_1}+[(({\bf e}-{\bf m})\cdot \nu)\nu]_{x_1},0\right)\\
&=-\dfrac{2\pi}{\lambda_0}{\bf m}_{x_1}\cdot (1,0,u_{x_1})\
-
\dfrac{2\pi}{\lambda_0}\(({\bf e}-{\bf m})\cdot \nu\)(1,0,u_{x_1})\cdot \nu_{x_1},\quad \text{since $\nu\cdot (1,0,u_{x_1})=0$}\\
&=-\dfrac{2\pi}{\lambda_0}{\bf m}_{x_1}\cdot (1,0,u_{x_1})+\dfrac{2\pi}{\lambda_0} \,\dfrac{u_{x_1x_1}}{\sqrt{1+|\nabla u|^2}} ({\bf e}-{\bf m})\cdot \nu.
\end{align*}
Notice that
$$
{\bf m}_{x_1}\cdot (1,0,u_{x_1})=({\bf m}\cdot (1,0,u_{x_1}))_{x_1}-{\bf m}\cdot (0,0,u_{x_1x_1})=h_{x_1x_1}-({\bf e}\cdot {\bf m})\,u_{x_1x_1}.
$$
Thus
$$
P_{x_1}\cdot N_{x_1}=-\dfrac{2\pi}{\lambda_0}h_{x_1x_1}+\dfrac{2\pi}{\lambda_0}\left(\dfrac{({\bf e}-{\bf m})\cdot \nu}{\sqrt{1+|\nabla u|^2}}+{\bf e}\cdot{\bf m}\right)u_{x_1x_1},
$$
and in general
\[
P_{x_i}\cdot N_{x_j}
=-\dfrac{2\pi}{\lambda_0}h_{x_ix_j}+\dfrac{2\pi}{\lambda_0}\left(\dfrac{({\bf e}-{\bf m})\cdot \nu}{\sqrt{1+|\nabla u|^2}}+{\bf e}\cdot{\bf m}\right)u_{x_ix_j}.
\]
Therefore the determinant in \eqref{eq:determinant simpler writing} is $-\left(\dfrac{2\pi}{\lambda_0}\right)^2$ times the determinant in \eqref{eq:determinant condition}, 
and consequently if \eqref{eq:determinant condition} holds we obtain \eqref{eq:Jacobian different from zero}.
Consequently, the existence of $\phi$ follows from Theorem \ref{thm:existence of phase discontinuitiy functions}.

With the phase $\phi$ obtained satisfying  \eqref{eq:Formula for phi general surface}, it remains to show that the metalens $(\Gamma,\phi)$ refracts incident rays with 
direction ${\bf e}$ into the direction ${\bf m}(x)$ at the point $(x,u(x))$ in a neighborhood of $(x_0,u(x_0))$. 
Since $\nabla \phi$ satisfies \eqref{eq:nabla phi general 
surface},  then ${\bf e}-{\bf m}-\nabla\(\lambda_0\phi/2\pi\)$ is parallel to $\nu$ and so Snell's law \eqref{eq:SnellLaw} is 
verified and consequently ${\bf e}$ is refracted into ${\bf m}$. However, to avoid total internal reflection, the ray ${\bf e}$ is refracted into the medium (vacuum) above the metasurface when ${\bf e }$ and $\nabla \phi$ satisfy condition \eqref{eq:internal reflection} with $n_1=n_2=1$, and ${\bf x}={\bf e}$, that is when 
\begin{equation}\label{eq:internal refraction bis}
\left[\({\bf e}-\nabla\(\lambda_0\phi/2\pi\)\)\cdot \nu\right]^2\geq \left|{\bf e}-\nabla\(\lambda_0\phi/2\pi\)\right|^2-1.
\end{equation}
To verify this, from \eqref{eq:nabla phi general surface} we have 
\begin{align*}
\left|{\bf e}-\nabla\(\lambda_0\phi/2\pi\)\right|^2-1
&=\left|{\bf m}+(({\bf e}-{\bf m})\cdot \nu)\nu\right|^2-1\\
&=(({\bf e}-{\bf m})\cdot\nu)^2+2(({\bf e}-{\bf m})\cdot\nu){\bf m}\cdot\nu\\
&=({\bf e}\cdot\nu)^2-({\bf m}\cdot\nu)^2\leq ({\bf e}\cdot\nu)^2,
\end{align*}
and so \eqref{eq:internal refraction bis} follows since $\nabla\phi\cdot \nu=0$.

\end{proof}

\subsection{Existence of phases when ${\bf m}$ is constant}\label{subsec:constant}
We apply the converse in Theorem \ref{thm:metalens  existence} with
${\bf m}(x):={\bf m}$ a constant vector for each $x\in \Omega$ and assume $u\in C^3$. It is clear that \eqref{eq:curl condition on m} holds.
To calculate the determinant in \eqref{eq:determinant condition}, since ${\bf m}$ is constant,
from \eqref{eq:gradient of h in terms of m} it follows that
$
h(x)={\bf m}\cdot (x,u(x))+C,
$
and $D^2h(x)=\({\bf e}\cdot {\bf m}\)D^2u(x)$. So the determinant in \eqref{eq:determinant condition} equals
\begin{equation}\label{eq:determinant when m is constant}
\det\left(
\left\{\dfrac{({\bf e}-{\bf m})\cdot \nu}{\sqrt{1+|\nabla u|^2}}\right\}D^2u\right)
=
\(\dfrac{({\bf e}-{\bf m})\cdot \nu}{\sqrt{1+|\nabla u|^2}}\)^2\,\det D^2u(x).
\end{equation}
If the last expression is not zero at $x=x_0$, 
 then \eqref{eq:determinant condition} holds and therefore a phase discontinuity $\phi$ exists in a neighborhood of $\(x_0,u(x_0)\)$.

\begin{remark}\label{rmk:planar case}\rm 
We remark that \eqref{eq:determinant condition} is only a sufficient condition for the existence of $\phi$ and it is not necessary. In fact, when $\Gamma$ is a plane, i.e., $u(x)={\bf a}\cdot x+b$ with ${\bf a}\in \R^2$, $b\in \R$, and ${\bf m}$ is a constant field, the determinant in \eqref{eq:determinant condition} equals the determinant in \eqref{eq:determinant when m is constant} which is zero. However, in this particular case we can still find a phase discontinuity $\phi$ in a neighborhood of $\Gamma$. In fact, the normal $\nu=\dfrac{\(-{\bf a},1\)}{\sqrt{1+|{\bf a}|^2}}$, and from \eqref{eq:nabla phi general surface} if $\phi$ exists it must verify
$$\nabla \phi(x,u(x))=\dfrac{2\pi}{\lambda_0}\left\{{\bf e}-{\bf m} -\left[({\bf e}-{\bf m})\cdot \dfrac{(-{\bf a},1)}{1+|{\bf a}|^2}\right](-{\bf a},1)\right\}:= {\bf v}.$$
Letting $\phi(x_1,x_2,x_3)={\bf v}\cdot (x_1,x_2,x_3)$ yields the desired phase and so the metalens $(\Gamma,\phi)$ refracts all vertical rays into the constant direction ${\bf m}$.
\end{remark}

\subsection{Existence of phases focusing all rays into one point} \label{subsec:point target}
Here we show the existence of metalenses $(\Gamma,\phi)$ that refract all vertical rays into a fixed point $P=(0,0,p)$, $p>0$. 
The set of refracted directions is 
\begin{equation}\label{eq:set of directions focusing into one point}
{\bf m}(x)=\dfrac{(-x,p-u(x))}{\sqrt{|x|^2+(p-u(x))^2}},
\end{equation}
where $u(x)<p$ for all $x$.
In this case, to show existence of phase discontinuities will use the following proposition.
\begin{proposition}\label{cor:near field}
For each function $u\in C^2$, ${\bf m}$ defined by \eqref{eq:set of directions focusing into one point}
satisfies \eqref{eq:curl condition on m}. 
Moreover, the determinant in \eqref{eq:determinant condition} can be written as
{\small
\begin{equation}\label{eq:near field condition}
 \dfrac{1}{h^2} \det\left(Id+\nabla u\otimes \nabla u-\dfrac{1}{h^2} \(x+(u-p)\,\nabla u\)\otimes \(x+(u-p)\,\nabla u\)+\dfrac{\(x,u-p-h\)\cdot (-\nabla u,1)}{1+|\nabla u|^2}D^2u\right),
\end{equation}
}
where $\nabla h=\({\bf m}\cdot \(1,0,u_{x_1}\),{\bf m}\cdot \(0,1,u_{x_2}\)\)$.
Therefore, if \eqref{eq:near field condition} is not zero at $x_0$ and $u\in C^3$, then from Theorem \ref{thm:metalens  existence}  
a phase discontinuity $\phi$ exists in a neighborhood of $(x_0,u(x_0))$.

\end{proposition}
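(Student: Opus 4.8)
The plan is to verify the two assertions of Proposition~\ref{cor:near field} directly from the definitions, using ${\bf m}$ as in \eqref{eq:set of directions focusing into one point} and the formula for the determinant in \eqref{eq:determinant condition}. First I would record the useful identity $|(x,u-p)|=\sqrt{|x|^2+(p-u)^2}=:h$ (the same $h$ as in the statement, up to an additive constant, since one checks that $\nabla h = ({\bf m}\cdot(1,0,u_{x_1}),{\bf m}\cdot(0,1,u_{x_2}))$ precisely when $h(x)=\sqrt{|x|^2+(p-u(x))^2}$; this follows by differentiating and noting $(x,u-p)$ is $-h$ times a unit vector pointing opposite to ${\bf m}$). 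Granting this, condition \eqref{eq:curl condition on m} is just the statement $\partial_{x_2}h_{x_1}=\partial_{x_1}h_{x_2}$, which holds automatically because $h\in C^2$ when $u\in C^2$. That disposes of the first claim and simultaneously identifies $h$.

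Next I would compute $D^2 h$ explicitly. From $h^2=|x|^2+(p-u)^2$ one gets $h\,h_{x_i}=x_i+(u-p)u_{x_i}$, i.e. $\nabla h=\tfrac1h\,(x+(u-p)\nabla u)$; call this vector $\tfrac1h\,w$ with $w:=x+(u-p)\nabla u$. Differentiating once more, $h_{x_i}h_{x_j}+h\,h_{x_ix_j}=\delta_{ij}+u_{x_i}u_{x_j}+(u-p)u_{x_ix_j}$, so
\[
D^2 h=\frac1h\Bigl(Id+\nabla u\otimes\nabla u+(u-p)D^2u-\frac1{h^2}\,w\otimes w\Bigr).
\]
Substituting this and ${\bf e}\cdot{\bf m}=(p-u)/h$ into the matrix $D^2h-\{{\bf e}\cdot{\bf m}+\frac{({\bf e}-{\bf m})\cdot\nu}{\sqrt{1+|\nabla u|^2}}\}D^2u$ from \eqref{eq:determinant condition}, the $(u-p)D^2u/h$ term combines with $-({\bf e}\cdot{\bf m})D^2u=-(p-u)D^2u/h$ to cancel, leaving
\[
\frac1h\Bigl(Id+\nabla u\otimes\nabla u-\frac1{h^2}w\otimes w\Bigr)-\frac{({\bf e}-{\bf m})\cdot\nu}{\sqrt{1+|\nabla u|^2}}\,D^2u.
\]
Factoring $1/h$ out of the whole $2\times 2$ matrix pulls out $1/h^2$ from the determinant, and it remains only to check that the coefficient of $D^2u$ matches the one in \eqref{eq:near field condition}, namely that $-h\cdot\frac{({\bf e}-{\bf m})\cdot\nu}{\sqrt{1+|\nabla u|^2}}=\frac{(x,u-p-h)\cdot(-\nabla u,1)}{1+|\nabla u|^2}$. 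Using $\nu=(-\nabla u,1)/\sqrt{1+|\nabla u|^2}$, ${\bf e}\cdot\nu=1/\sqrt{1+|\nabla u|^2}$, and ${\bf m}\cdot\nu=\frac{(-x,p-u)\cdot(-\nabla u,1)}{h\sqrt{1+|\nabla u|^2}}=\frac{x\cdot\nabla u+p-u}{h\sqrt{1+|\nabla u|^2}}$, one has $h({\bf e}-{\bf m})\cdot\nu=\frac{h-(x\cdot\nabla u+p-u)}{\sqrt{1+|\nabla u|^2}}=\frac{(x,u-p-h)\cdot(-\nabla u,1)}{\sqrt{1+|\nabla u|^2}}$, which after dividing by the remaining $\sqrt{1+|\nabla u|^2}$ is exactly the asserted coefficient (up to sign, which is absorbed since the determinant being nonzero is a sign-independent condition, but I would keep track of signs carefully to match \eqref{eq:near field condition} verbatim). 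The final sentence of the proposition is then immediate: if the expression \eqref{eq:near field condition} is nonzero at $x_0$ then so is \eqref{eq:determinant condition}, and the converse direction of Theorem~\ref{thm:metalens existence} applies (noting $u\in C^3$ gives the needed regularity and that $m_3=(p-u)/h\ge 0$ since $u<p$).

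The only genuinely delicate point is bookkeeping: matching signs and the precise placement of the $1/h^2$ factors between my computed expression and the stated \eqref{eq:near field condition}, and confirming that the term $(u-p)D^2u/h$ in $D^2h$ cancels cleanly against $-({\bf e}\cdot{\bf m})D^2u$ — this cancellation is what makes the formula as clean as stated, so if it failed I would have made an arithmetic slip. There is no conceptual obstacle; everything reduces to differentiating $h^2=|x|^2+(p-u)^2$ twice and organizing dot products with $\nu$.
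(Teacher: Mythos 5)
Your overall strategy is exactly the paper's: identify $h$ explicitly from the condition $\nabla h=\({\bf m}\cdot(1,0,u_{x_1}),{\bf m}\cdot(0,1,u_{x_2})\)$, note that \eqref{eq:curl condition on m} is then just the symmetry of $D^2h$, compute $D^2h$ from $h^2=|x|^2+(p-u)^2$, and substitute into \eqref{eq:determinant condition}. However, you have the sign of $h$ wrong, and this is not a cosmetic issue. Since ${\bf m}\cdot(1,0,u_{x_1})=\dfrac{-x_1+(p-u)u_{x_1}}{\sqrt{|x|^2+(p-u)^2}}=-\partial_{x_1}\sqrt{|x|^2+(p-u)^2}$, the function satisfying the stated gradient condition is $h=-\sqrt{|x|^2+(p-u)^2}$, not $+\sqrt{|x|^2+(p-u)^2}$ as you assert. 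With your choice one gets ${\bf e}\cdot{\bf m}=(p-u)/h$, and then the cancellation you rely on fails: the $D^2u$ contribution from $D^2h$ is $\dfrac{u-p}{h}D^2u$ while $-({\bf e}\cdot{\bf m})D^2u=\dfrac{u-p}{h}D^2u$ as well, so they add to $\dfrac{2(u-p)}{h}D^2u$ rather than cancelling. Likewise your final identity $h({\bf e}-{\bf m})\cdot\nu=\dfrac{(x,u-p-h)\cdot(-\nabla u,1)}{\sqrt{1+|\nabla u|^2}}$ is off: the left side computes to $\dfrac{-x\cdot\nabla u+u-p+h}{\sqrt{1+|\nabla u|^2}}$ and the right to $\dfrac{-x\cdot\nabla u+u-p-h}{\sqrt{1+|\nabla u|^2}}$, a discrepancy of $2h$ in the numerator. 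Your remark that the sign ``is absorbed since the determinant being nonzero is a sign-independent condition'' does not apply here: the sign of $h$ enters linearly inside the entry $(x,u-p-h)$, so it changes the matrix itself, not merely an overall sign of the determinant.

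To your credit, you flagged exactly the right diagnostic: you wrote that if the $(u-p)D^2u/h$ term failed to cancel against $-({\bf e}\cdot{\bf m})D^2u$ you would know you had made an arithmetic slip --- and with $h>0$ it does fail. The fix is simply to take $h=-\sqrt{|x|^2+(p-u)^2}$ (as the paper does); then ${\bf e}-{\bf m}=-\frac{1}{h}(x,u-p-h)$, ${\bf e}\cdot{\bf m}=-(p-u)/h$, the $D^2u$ terms cancel, and every step of your computation goes through verbatim to yield \eqref{eq:near field condition}. The first claim (the curl condition) and the final application of Theorem \ref{thm:metalens  existence} are unaffected by the sign and are correct as you wrote them.
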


\begin{proof}
We have
$${\bf m}(x)\cdot (1,0,u_{x_1})= \dfrac{-x_1+(p-u(x))\,u_{x_1}}{\sqrt{|x|^2+(p-u(x))^2}}=-\left\{\sqrt{|x|^2+(p-u(x))^2}\right\}_{x_1},$$
and similarly
${\bf m}(x)\cdot (0,1,u_{x_2})=-\left\{\sqrt{|x|^2+(p-u(x))^2}\right\}_{x_2}.$
Hence ${\bf m}(x)$ satisfies \eqref{eq:curl condition on m}. 
If we set 
\begin{equation}\label{eq:definition of h(x)}
h(x)=-\sqrt{|x|^2+(p-u(x))^2},
\end{equation} 
then $h$ satisfies \eqref{eq:gradient of h in terms of m}.

Let us calculate the matrix inside the determinant in \eqref{eq:determinant condition}. 
From the definition of ${\bf m}$
\begin{align*}
{\bf e}-{\bf m}
&=-\dfrac{1}{h}\left(x,u-p-h\right)\notag\\
{\bf e}\cdot {\bf m}+\dfrac{\({\bf e}-{\bf m}\)\cdot \nu}{\sqrt{1+|\nabla u|^2}}
&=-\dfrac{p-u}{h}-\dfrac{1}{h}\dfrac{\left(x,u-p-h\right)\cdot (-\nabla u,1)}{1+|\nabla u|^2}.\label{eq:factor of hessian}
\end{align*}
Let us next calculate $D^2h$: 
{\small
\begin{align*}
\partial_{x_i}h&=- \dfrac{x_i+u_{x_i}\,(u-p)}{\sqrt{|x|^2+(u-p)^2}}=\dfrac{x_i+u_{x_i}\,(u-p)}{h}\\
\partial_{x_ix_j}h
&=h^{-1}
\(\delta_{ij}+u_{x_j}\,u_{x_i}+(u-p)\,u_{x_ix_j}\)-h^{-3}\,\(x_i+(u-p)\,u_{x_i}\)\,\(x_j+(u-p)\,u_{x_j}\).
\end{align*}
}
Then 
\[
D^2 h=-\dfrac{1}{h^3} \(x+(u-p)\,\nabla u\)\otimes \(x+(u-p)\,\nabla u\)
+
\dfrac{1}{h}\(Id+\nabla u\otimes \nabla u+(u-p)\,D^2u\).
\]
Combining these calculations it follows that the determinant in \eqref{eq:determinant condition} equals
{\small
$$
 \det\left(\dfrac{1}{h}\(Id+\nabla u\otimes \nabla u\)-\dfrac{1}{h^3} \(x+(u-p)\,\nabla u\)\otimes \(x+(u-p)\,\nabla u\)+\dfrac{1}{h}\dfrac{\(x,u-p-h\)\cdot (-\nabla u,1)}{1+|\nabla u|^2}D^2u\right),
 $$
}
and factoring out $1/h$ yields \eqref{eq:near field condition}.
\end{proof}

\begin{remark}\label{rmk:radial Gamma existence of phase}\rm
With this proposition we show existence of phases when $\Gamma$ is given by the graph of a radial function $u$.
Let us write $u(x):=v(|x|^2)$ and we have $p>u(x)$ for all $x$.
Setting $r=|x|^2=x_1^2+x_2^2$ it follows that
\begin{align*}
u_{x_i}=2x_i\,v'(r),\qquad
u_{x_ix_j}=2\,\delta_{ij}\,v'(r)+4\,x_i\,x_j\,v''(r).
\end{align*}
Hence
\begin{align*}
\nabla u=2v'(r)\, x,\qquad 
D^2u= 2v'(r)\,Id+4v''(r)\, x\otimes x.
\end{align*}
We now calculate the components of the matrix $\mathcal A$ inside the determinant in \eqref{eq:near field condition}: 
\begin{align*}
\nabla u\otimes \nabla u&=4\left(v'(r)\right)^2\,x\otimes x\\
\(x+(u-p)\nabla u\)\otimes \(x+(u-p)\nabla u\)&=\(x+2(v(r)-p)v'(r)\,x\)\otimes \(x+2(v(r)-p)v'(r)\,x\)\\
&=\left(1+2v'(r)(v(r)-p)\right)^2\,x\otimes x\\
\dfrac{(x,u-p-h)\cdot(-\nabla u,1)}{1+|\nabla u|^2}&=\dfrac{(x,v(r)-p-h)\cdot\(-2v'(r)\,x,1\)}{1+4\,\left(v'(r)\right)^2\,r}\\
&=\dfrac{-2r\,v'(r)+v(r)-p-h}{1+4\,(v'(r))^2\,r}.
\end{align*}
Notice that in this case, the function $h$ in \eqref{eq:definition of h(x)} is radial with $h=h(r)=-\sqrt{r+(p-v(r))^2}$. Replacing in the formula for $\mathcal A$ we obtain
{\small
\begin{align*}
\mathcal A&=Id+ 4\left(v'\right)^2\,x\otimes x-\dfrac{1}{h^2} \left(1+2v'(v-p)\right)^2\,x\otimes x+\dfrac{-2r\,v'+v-p-h}{1+4\,(v')^2\,r}\left(2v'\,Id+4v''\,x\otimes x.\right)\\
&=\left(1+2\,v'\,\dfrac{-2r\,v'+v-p-h}{1+4\,(v')^2\,r}\right)Id+\(4(v')^2-\dfrac{1}{h^2} \left(1+2v'(v-p)\right)^2+4\dfrac{-2r\,v'+v-p-h}{1+4\,(v')^2\,r}v''\)x\otimes x.
\end{align*}
}
Notice that if $x=0$, then $\mathcal A=Id$, and therefore from Proposition \ref{cor:near field} there is a phase discontinuity in a neighborhood of $x=0$.

\end{remark}

\setcounter{equation}{0}
\section{Analysis of chromatic dispersion in metalenses}

\subsection{Dispersion of dichromatic light}
Let $(\Gamma,\phi)$ be a metasurface in $\R^3$ surrounded by vacuum. A dichromatic ray with unit direction ${\bf x}$, superposition of two wavelengths $\lambda_1$ and $\lambda_2$, strikes $\Gamma$ at a point $P$. 
Let $ \nu(P)$ be the outer-unit normal to $\Gamma$ at $P$, and let ${\bf m}_1$, ${\bf m}_2$ be the unit directions of the refracted rays at $P$ corresponding to $\lambda_1$ and $\lambda_2$. 
From \eqref{eq:SnellLaw}, the vectors
${\bf x}-{\bf m_1}-\nabla (\lambda_1\phi(P)/2\pi)$ and ${\bf x}-{\bf m_2}-\nabla (\lambda_2\phi(P)/2\pi)$ are parallel to $\nu(P)$. We show that the following statements are equivalent
\begin{enumerate}
\item ${\bf m_1}={\bf m_2}$,
\item $\nabla \phi(P)$ is colinear to $\nu(P)$,
\item ${\bf m_1}={\bf m_2}={\bf x}$.
\end{enumerate}
This means that if $\nabla \phi(P)$ is not parallel to $\nu(P)$, then rays composed with different colors will be dispersed by the metalens, i.e., refracted in different directions. 
And only in case $\nabla \phi(P)$ is parallel to $\nu(P)$, the incident ray propagates through the metalens without changing direction.

\begin{proof}
$(3)\implies (1)$ is trivial, so we will show that 
$(1)\implies (2)\implies (3).$

\textbf{$(1)\implies (2)$.} 
Since $n_1=n_2$, from \eqref{eq:SnellLaw}
$$\left({\bf x}-{\bf m_1}\right) \times  \nu(P)=\nabla \(\lambda_1\phi(P)/2\pi\)\times \nu(P),
\qquad ({\bf x}-{\bf m_2})\times \nu(P)=\nabla \(\lambda_2\phi(P)/2\pi\)\times  \nu(P),$$
implying
$\dfrac{\lambda_1-\lambda_2}{2\pi}\nabla \phi(P)\times  \nu(P)=0,$ and since $\lambda_1\neq\lambda_2$, $(2)$ follows.
\\
\textbf{$(2)\implies (3)$.} If $\nabla \phi(P)$ is parallel to $\nu(P)$, then from \eqref{eq:SnellLaw}
$\({\bf x}-{\bf m_1}\)\times  \nu(P)=0,$
so there exists $\mu_1\in \R$ such that
\begin{equation}\label{eq:refraction special case}
{\bf x}-{\bf m_1}=\mu_1 \nu(P).
\end{equation}
Dotting \eqref{eq:refraction special case} with ${\bf x}$ and ${\bf m}_1$ yields
$$1- {\bf m_1}\cdot {\bf x}=\mu_1{\bf x}\cdot \nu(P)\qquad\qquad {\bf m_1}\cdot {\bf x}-1=\mu_1 {\bf m_1}\cdot \nu(P),$$
and adding these identities, we obtain $\mu_1 ({\bf x}+{\bf m_1})\cdot \nu(P)=0.$  
 Hence $\mu_1=0$ or ${\bf x}\cdot \nu(P)={\bf m_1}\cdot \nu(P)=0$, since ${\bf x}\cdot \nu(P)$ and ${\bf m_1}\cdot \nu(P)$ are both non negative.  Dotting \eqref{eq:refraction special case} with $\nu(P)$, if  ${\bf x}\cdot \nu(P)={\bf m_1}\cdot \nu(P)=0$ then we also get $\mu_1=0$. 
Therefore ${\bf x}={\bf m_1},$
and similarly, ${\bf x}={\bf m_2},$ concluding $(3)$.
\end{proof}

\subsection{Analysis of the chromatic aberration for a plane metasurface}\label{sec:chromatic aberration plane}
Let $\Gamma$ be the horizontal plane $x_3=a$ in $\R^3$.
Suppose a phase discontinuity $\phi$ is defined in a neighborhood of $\Gamma$ such that vertical rays having color with wavelength $\lambda_0$ and striking $\Gamma$ are refracted into a point $P_0=(0,0,p_0)$ above $\Gamma$ as in Figure \ref{fig:pic for m(x)}. 
We then ask how this planar metasurface $(\Gamma,\phi)$ focuses vertical rays with wavelength $\lambda\neq \lambda_0$. 
Assuming that all incoming rays pass through the circular aperture $x_1^2+x_2^2\leq R^2$, we shall prove 
that for appropriate values of $\lambda$ each refracted ray with wavelength $\lambda$ intersects the $z$-axis at some point $P_{\lambda}(x)=(0,0,p_{\lambda}(x))$ depending on $x=(x_1,x_2)$. In addition, we shall prove an estimate for the distance between the points $P_{\lambda}$ and $P_0$ in terms of $ p_0-a$ and the ratio $\lambda/\lambda_0$, see \eqref{eq:estimate of the distance between P and P0} and Figure \ref{fig:pic of rays with two colors}. We summarize the results of this section in the following theorem.

\begin{theorem}\label{thm:planar case error}
Let $P_0=(0,0,p_0)$. Consider the metalens $(\Gamma, \phi)$ surrounded by vacuum, with $\Gamma$ the horizontal plane $x_3=a$ in $\R^3$, $a<p_0$ and with
\begin{equation}\label{eq:phi planar}
\phi(x_1,x_2,x_3)=\dfrac{2\pi}{\lambda_0}\,\sqrt{x_1^2+x_2^2+(p_0-a)^2}+g(x_3)=
\dfrac{2\pi}{\lambda_0}\,\dist\((x,a),P_0\)+g(x_3)
\end{equation}
where $g$ is an arbitrary $C^1$ function satisfying $g'(a)=0$.
Suppose vertical rays that are superposition of two colors, with wavelengths $\lambda_0$ and $\lambda$, strike $\Gamma$ at  
$(x_1,x_2,a)$, with $x_1^2+x_2^2<R^2$. 

Then each ray splits into two rays one with wavelength $\lambda_0$ and another with wavelength $\lambda$.
Each ray with wavelength $\lambda_0$ is refracted into the point $P_0$. 
And, if 
$\dfrac{\lambda}{\lambda_0}<1$, or $1< \dfrac{\lambda}{\lambda_0}\leq \sqrt{1+\dfrac{(p_0-a)^2}{R^2}} $, then 
each ray with wavelength $\lambda$ is refracted into the point $P_{\lambda}(x)=\(0,0,\dfrac{\lambda_0}{\lambda}\sqrt{\(1-\(\lambda/\lambda_0\)^2\)\,|x|^2+(p_0-a)^2}\).$
We also have 
\begin{equation}\label{eq:estimate of the distance between P and P0}
|P_0-P_{\lambda}(x)|\leq
\left|1-\dfrac{\lambda_0}{\lambda}\right|\,\(p_0-a\)
\end{equation}
for each $x=(x_1,x_2)$ with $x_1^2+x_2^2\leq R^2$.

\end{theorem}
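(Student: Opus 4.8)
The plan is to reduce the statement to the generalized Snell's law of Section~\ref{sec:Snell law} on the flat interface $\Gamma=\{x_3=a\}$, whose outward unit normal is the constant vector $\nu={\bf e}=(0,0,1)$, followed by a two-dimensional ray trace and an elementary estimate. First I would compute $\nabla\phi$ on $\Gamma$: from \eqref{eq:phi planar} the argument of the square root does not involve $x_3$ and $g'(a)=0$, so only the number $g'(a)=0$ matters, and, writing $d_0=d_0(x):=\dist\bigl((x,a),P_0\bigr)=\sqrt{|x|^2+(p_0-a)^2}$, one gets
$$\nabla\phi(x_1,x_2,a)=\dfrac{2\pi}{\lambda_0}\,\dfrac{(x_1,x_2,0)}{d_0},$$
which is tangential to $\Gamma$. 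In particular, for a vacuum ray of wavelength $\ell$ the vector appearing in Snell's law is $\nabla\bigl(\ell\,\phi/2\pi\bigr)(x,a)=(\ell/\lambda_0)\,(x_1,x_2,0)/d_0$.

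Next I would invert Snell's law. For a vacuum-to-vacuum ray of wavelength $\ell$ one has $k_1=k_2=2\pi/\ell$, so \eqref{eq:snell law for wave vectors} with ${\bf x}={\bf e}$ gives $({\bf e}-{\bf m}_\ell)\times\nu=\nabla(\ell\phi/2\pi)\times\nu$; hence ${\bf e}-{\bf m}_\ell-\nabla(\ell\phi/2\pi)$ is parallel to $\nu$, and writing out this condition together with $\nabla\phi\cdot\nu=0$, imposing $|{\bf m}_\ell|=1$, and selecting the root with $({\bf m}_\ell)_3\ge0$ — the one admissible under the no-total-internal-reflection condition \eqref{eq:internal reflection} — yields, for $\ell\in\{\lambda_0,\lambda\}$,
$${\bf m}_\ell=\dfrac1{d_0}\Bigl(-\tfrac{\ell}{\lambda_0}\,x_1,\ -\tfrac{\ell}{\lambda_0}\,x_2,\ \sqrt{\bigl(1-(\ell/\lambda_0)^2\bigr)|x|^2+(p_0-a)^2}\,\Bigr).$$
For $\ell=\lambda_0$ this is exactly the unit vector from $(x,a)$ toward $P_0$, which proves the first assertion that every $\lambda_0$-ray is refracted into $P_0$. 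For $\ell=\lambda$ the square root is real precisely when $(\lambda/\lambda_0)^2|x|^2\le d_0^2$: this holds automatically if $\lambda<\lambda_0$, and if $\lambda>\lambda_0$ it is equivalent to $|x|^2\le(p_0-a)^2/((\lambda/\lambda_0)^2-1)$, which, since $|x|\le R$, is guaranteed by $\lambda/\lambda_0\le\sqrt{1+(p_0-a)^2/R^2}$; substituting ${\bf x}={\bf e}$ and $\nabla(\lambda\phi/2\pi)$ into \eqref{eq:internal reflection} shows the same inequality rules out total internal reflection, so the $\lambda$-ray genuinely refracts into the medium above.

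Then I would trace the refracted $\lambda$-ray and estimate. Writing it as $(x_1,x_2,a)+s\,{\bf m}_\lambda$ and solving for the value of $s$ making the first two coordinates vanish gives $s=(\lambda_0/\lambda)\,d_0$; substituting into the third coordinate produces the point $P_\lambda(x)$ on the $z$-axis (for $x\ne0$; for $x=0$ the refracted ray is vertical and $P_\lambda(0)$ is the limiting value). Finally $|P_0-P_\lambda(x)|=|p_0-p_\lambda(x)|$ becomes, with $\rho=|x|\in[0,R]$ and $t=\lambda/\lambda_0$, an explicit function of $\rho$ built from $\tfrac1t\sqrt{(1-t^2)\rho^2+(p_0-a)^2}$, and the bound \eqref{eq:estimate of the distance between P and P0} follows from an elementary one-variable inequality, which I would verify by treating the sub-wavelength regime $t<1$ and the super-wavelength regime $t>1$ separately, in the latter using $(1-t^2)R^2+(p_0-a)^2\ge0$.

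Everything here is elementary; the two points requiring care are the sign selection in the Snell inversion — one must keep the vertical component of ${\bf m}_\lambda$ nonnegative, and recognising that this is exactly condition \eqref{eq:internal reflection} is what carves out the range $\lambda/\lambda_0<1$ or $1<\lambda/\lambda_0\le\sqrt{1+(p_0-a)^2/R^2}$ in the statement — and the closing one-variable inequality, where the sub- and super-wavelength regimes must be handled separately. Once ${\bf m}_\lambda$ is known, the ray trace is immediate.
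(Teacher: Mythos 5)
Your plan reproduces the paper's argument essentially step for step: you compute the tangential gradient $\nabla\phi(x_1,x_2,a)=\frac{2\pi}{\lambda_0}\,\frac{(x_1,x_2,0)}{d_0}$ with $d_0=\sqrt{|x|^2+(p_0-a)^2}$, invert the generalized Snell law with $\nu={\bf e}$ to get ${\bf m}_\ell$ with horizontal part $-(\ell/\lambda_0)x/d_0$ and vertical part $\sqrt{\Delta_\ell}/d_0$, recover the focusing of the $\lambda_0$-rays at $P_0$, identify the admissible range of $\lambda/\lambda_0$ from \eqref{eq:internal reflection} exactly as the paper does, and trace the $\lambda$-ray to the axis with $s=(\lambda_0/\lambda)d_0$, obtaining $p_\lambda(x)=a+\frac{\lambda_0}{\lambda}\sqrt{\Delta(x)}$, $\Delta(x)=\left(1-(\lambda/\lambda_0)^2\right)|x|^2+(p_0-a)^2$. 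All of that is correct and matches the paper's proof.

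The gap is the closing step, which you wave off as ``an elementary one-variable inequality'': that is precisely the step that cannot be finessed. Rationalizing as the paper does gives the exact expression
\[
|P_0-P_\lambda(x)|=\left|\left(\frac{\lambda_0}{\lambda}\right)^2-1\right|\,
\frac{|x|^2+(p_0-a)^2}{\frac{\lambda_0}{\lambda}\sqrt{\Delta(x)}+(p_0-a)},
\]
and in both regimes ($\lambda<\lambda_0$, and $\lambda_0<\lambda\le\lambda_0\sqrt{1+(p_0-a)^2/R^2}$) this is an \emph{increasing} function of $|x|$ whose value at $x=0$ is exactly $|1-\lambda_0/\lambda|(p_0-a)$. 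So the bound \eqref{eq:estimate of the distance between P and P0} is attained at the center and exceeded for every $x\neq0$; for instance $\lambda_0=2\lambda$, $p_0-a=R=1$, $|x|=1$ gives an error $2\sqrt{1.75}-1\approx1.65>1$. No one-variable argument can produce \eqref{eq:estimate of the distance between P and P0} uniformly on $|x|\le R$, so your deferred step would fail as described. (The paper's own finish has the same inconsistency: it shows the auxiliary function $F(s)$ is increasing on $[0,R^2]$ and then bounds the error by $F(0)$ instead of $F(R^2)$.) To complete the proof honestly you must either replace the right-hand side of \eqref{eq:estimate of the distance between P and P0} by the value of the displayed expression at $|x|=R$ — which collapses to the stated bound only in the paraxial regime $R\ll p_0-a$, and still yields the order-of-magnitude comparison used in Section \ref{sec:standard lens} — or present $|1-\lambda_0/\lambda|\,(p_0-a)$ as the error at the axis (equivalently the infimum over the aperture), making that restriction explicit rather than claiming a uniform upper bound.
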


\begin{figure}[htp]
\begin{center}
\subfigure[Rays with wavelength $\lambda_0$ focus into a fixed point]{\label{fig:pic for m(x)}\includegraphics[width=2.5in]{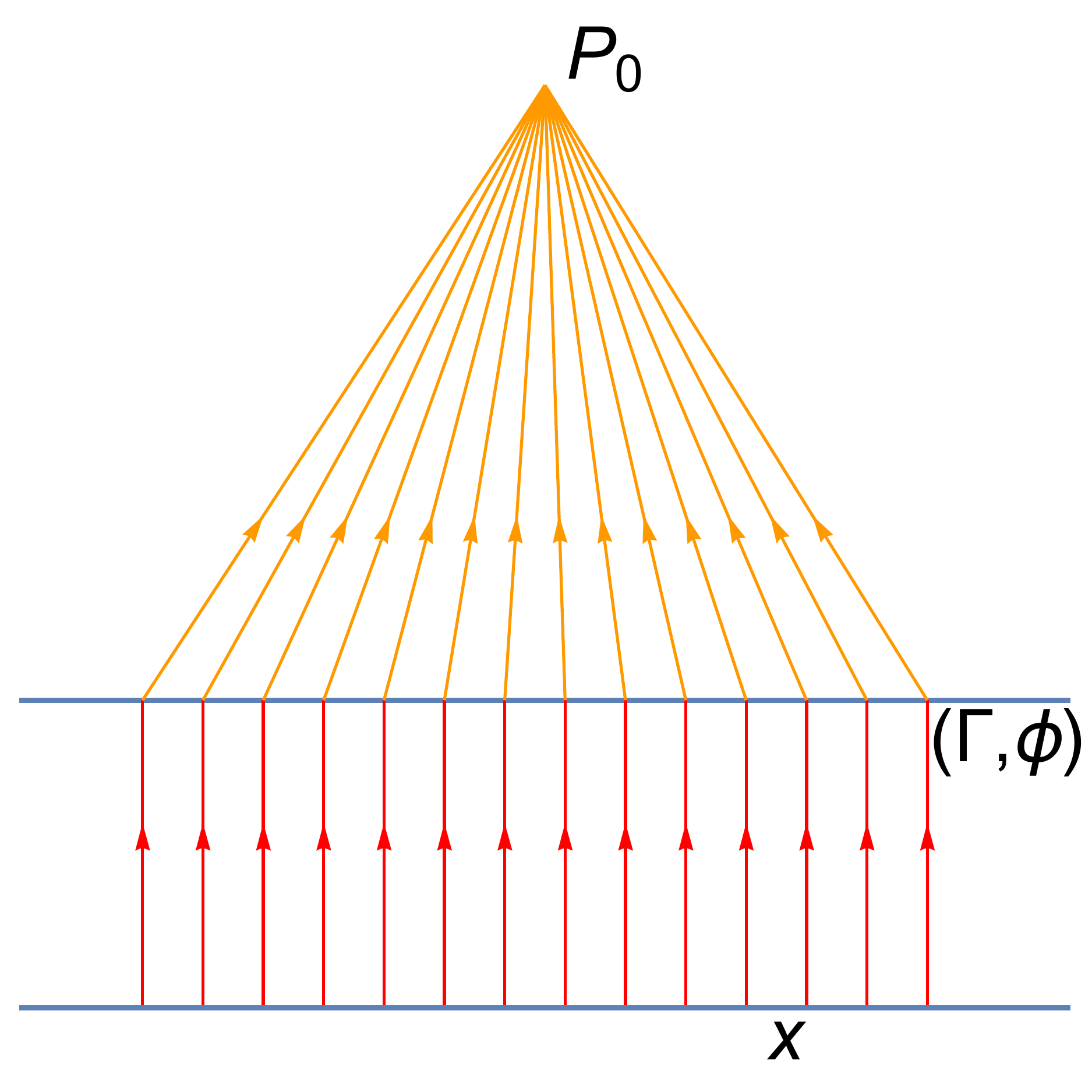}}
\subfigure[Rays with wavelength $\lambda$ are defocused]{\label{fig:pic of rays with two colors}\includegraphics[width=2.5in]{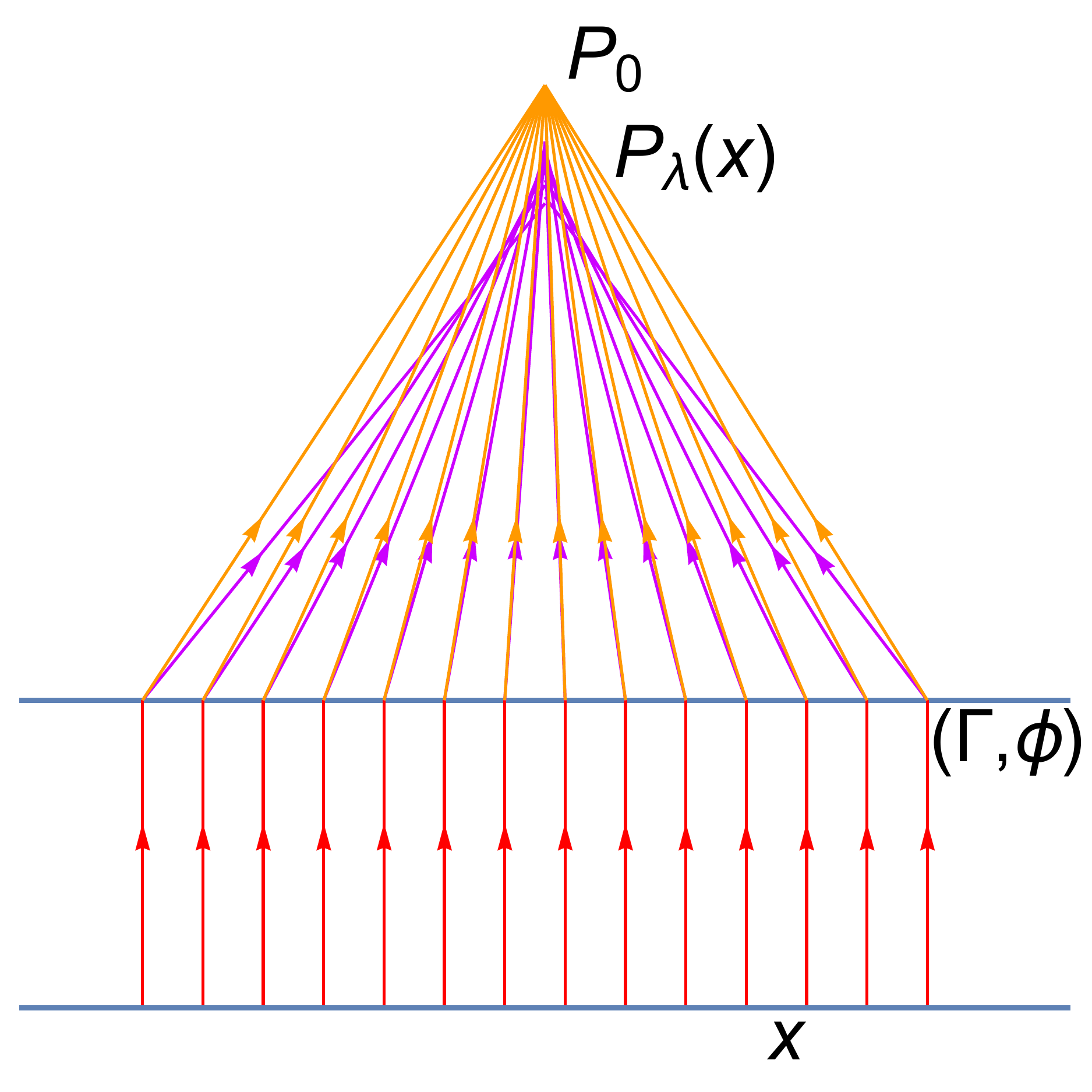}}
\end{center}
  \caption{$ $}
  \label{fig:fields}
\end{figure}

\begin{proof}
Let us first calculate the phase discontinuity focusing rays with color $\lambda_0$ into $P_0$. The incident ray  with direction ${\bf e}=(0,0,1)$ emanating from $x=(x_1,x_2)$ with $x_1^2+x_2^2\leq R^2$ strikes the plane $x_3=a$ and then bends to strike the point $P_0$. 
Applying the set up from Section \ref{sec:general field of directions} in this case we have 
${\bf m}(x)=\dfrac{P_0-(x,a)}{|P_0-(x,a)|}$ and the normal $\nu={\bf e}$. Since we seek $\phi$ tangential to $\Gamma$, we then have from \eqref{eq:nabla phi general surface} that 
\begin{equation}\label{eq:formula for V planar case}
V(x,a)=\nabla \phi(x,a)=\dfrac{2\pi}{\lambda_0}\(-{\bf m}+\({\bf m}\cdot {\bf e}\)\,{\bf e}\)
=
\dfrac{2\pi}{\lambda_0}\(\dfrac{x}{\sqrt{|x|^2+(p_0-a)^2}},0\).
\end{equation}
Clearly, the function $\phi$ in \eqref{eq:phi planar} is the desired phase function.
Therefore, the planar metasurface having this phase focuses all rays with color $\lambda_0$ into the point $P_0$.

Let us now see how the metalens $(\Gamma,\phi)$ focuses rays with wavelength $\lambda$. 
First, to avoid total reflection, from \eqref{eq:internal reflection}, rays with color $\lambda$ are refracted by the metalens $(\Gamma,\phi)$ if 
\begin{equation}\label{eq:internal reflection lambda}
\left[\left({\bf e}-\nabla\(\lambda\,\phi/2\pi\)\right)\cdot \nu\right]^2\geq \left|{\bf e}-\nabla\(\lambda\,\phi/2\pi\)\right|^2-1.
\end{equation}
Since $\nu=(0,0,1)={\bf e},$ and $\nabla \phi\cdot \nu=0$, then 
\eqref{eq:internal reflection lambda} is equivalent to
$$0\leq 2-\left|{\bf e}-\nabla \(\lambda\,\phi/2\pi\) \right|^2=\dfrac{\(1-\(\lambda/\lambda_0\)^2\)\,|x|^2+(p_0-a)^2}{|x|^2+(p_0-a)^2}:=\dfrac{\Delta(x)}{|x|^2+(p_0-a)^2}.$$
Clearly, this inequality holds for $\lambda<\lambda_0$. If $\lambda>\lambda_0$, then $\Delta\geq 0$ if and only if $(p_0-a)^2\geq \(\(\lambda/\lambda_0\)^2-1\)\,|x|^2$. Since $|x|^2\leq R^2$, if $\lambda>\lambda_0$ satisfies
\[
\sqrt{\(\lambda/\lambda_0\)^2-1}\leq \dfrac{p_0-a}{R}
\]
then $\Delta\geq 0$ for all $|x|\leq R$. Therefore for these values of $\lambda$ total reflection is avoided.

Next, we need to know the directions of the refracted rays with color $\lambda$. 
According to \eqref{eq:SnellLaw bis}, $(\Gamma,\phi)$ refracts rays with color $\lambda$ into the unit direction ${\bf m'}$ with 
\begin{equation}\label{eq:refraction with color lambda}
{\bf m'}= {\bf e}-\mu'\,\nu -\nabla \(\lambda\,\phi/2\pi\),
\end{equation}
where from \eqref{eq:formula for mu}
\[
\mu'=1-\sqrt{2-\left|{\bf e}-\nabla \(\lambda\,\phi/2\pi\) \right|^2}=1-
\dfrac{1}{\sqrt{|x|^2+(p_0-a)^2}}
\,
\sqrt{\Delta(x)}.
\]
Writing \eqref{eq:refraction with color lambda} in coordinates yields
\begin{align*}
m_i'&=-\lambda\,\phi_{x_i}/2\pi=-\dfrac{\lambda}{\lambda_0}\,\dfrac{x_i}{\sqrt{|x|^2+(p_0-a)^2}},\qquad i=1,2\\
m_3'&=1-\mu=\dfrac{1}{\sqrt{|x|^2+(p_0-a)^2}}\,
\sqrt{\Delta(x)}.
\end{align*}

We now see where the line $t\,{\bf m}'+(x,a)$ intersects the vertical line $x_1=x_2=0$, that is, we find the point on this line that is focused by the refracted ray emanating from $x=(x_1,x_2)$ with direction ${\bf e}$ and color $\lambda$.
So we need to find $t$ so that $t\,m_1'+x_1=0$ and $t\,m_2'+x_2=0$. 
This means 
$t= \dfrac{\lambda_0}{\lambda}\,\sqrt{|x|^2+(p_0-a)^2}$.
Hence 
$
t\,m_3'+a=\dfrac{\lambda_0}{\lambda}\,\sqrt{\Delta}+a
$, 
and the focused point on the vertical line by the vertical ray emanating from $x$ with color $\lambda$ is $P_\lambda(x)=\(0,0,\dfrac{\lambda_0}{\lambda}\,\sqrt{\Delta}+a\)$.

We now calculate the distance between $P_0$ and $P_\lambda(x)$.
First write
\begin{align*}
t\,m_3'+a-p_0&=\dfrac{\lambda_0}{\lambda}\,\sqrt{\Delta}-\(p_0-a\)
=
\dfrac{\(\dfrac{\lambda_0}{\lambda}\,\sqrt{\Delta}-\(p_0-a\)\)\(\dfrac{\lambda_0}{\lambda}\,\sqrt{\Delta}+\(p_0-a\)\)}{\dfrac{\lambda_0}{\lambda}\,\sqrt{\Delta}+\(p_0-a\)}\\
&=
\dfrac{\(\dfrac{\lambda_0}{\lambda}\)^2\,\Delta-\(p_0-a\)^2}{\dfrac{\lambda_0}{\lambda}\,\sqrt{\Delta}+\(p_0-a\)}
=
\dfrac{\(\dfrac{\lambda_0}{\lambda}\)^2\(1-\(\dfrac{\lambda}{\lambda_0}\)^2\)\,|x|^2
+\(\(\dfrac{\lambda_0}{\lambda}\)^2-1\)\(p_0-a\)^2}{\dfrac{\lambda_0}{\lambda}\,\sqrt{\Delta}+\(p_0-a\)}\\
&=
\(\(\dfrac{\lambda_0}{\lambda}\)^2-1\)
\,\dfrac{|x|^2
+\(p_0-a\)^2}{\dfrac{\lambda_0}{\lambda}\,\sqrt{\Delta}+\(p_0-a\)}.
\end{align*}
We then obtain
\begin{equation}\label{eq:error formula for plane}
\dist\(P_0,P_\lambda(x)\)=\left| t\,m_3'+a-p_0\right|
= 
\left|\(\dfrac{\lambda_0}{\lambda}\)^2-1\right|
\,\dfrac{|x|^2
+\(p_0-a\)^2}{\dfrac{\lambda_0}{\lambda}\,\sqrt{\Delta}+\(p_0-a\)},
\end{equation}
since the right hand side is a radial function of $x$, then rays with color $\lambda$ emitted from all points in a circle are focused into the same point. 
Set $|x|^2=s$ and consider the function $F(s)=\dfrac{s
+\(p_0-a\)^2}{\dfrac{\lambda_0}{\lambda}\,\sqrt{\(1-\(\lambda/\lambda_0\)^2\)\,s+(p_0-a)^2}+\(p_0-a\)}$. 
To estimate the error \eqref{eq:error formula for plane} and obtain \eqref{eq:estimate of the distance between P and P0}, we will find the maximum of $F(s)$ for $s\in [0,R^2]$. 
Suppose first that $\lambda<\lambda_0$. In this case $\Delta>0$ and it is easy to see that $F'(s)>0$ in the interval $[0,R^2]$ and so $F$ is increasing. 
Therefore when $\lambda<\lambda_0$
\[
\left| t\,m_3'+a-p_0\right|\leq \left|\(\dfrac{\lambda_0}{\lambda}\)^2-1\right|\,F(0)=
\(\dfrac{\lambda_0}{\lambda}-1\)\,\(p_0-a\).
\]
On the other hand, if $\lambda_0\sqrt{1+\dfrac{(p_0-a)^2}{R^2}} >\lambda>\lambda_0$, then it is also easy to see that $F$ is increasing in $[0,R^2]$ and so the error 
\[
\left| t\,m_3'+a-p_0\right|\leq \left|\(\dfrac{\lambda_0}{\lambda}\)^2-1\right|\,F(0)=
\(1-\dfrac{\lambda_0}{\lambda}\)\,\(p_0-a\).
\]
Therefore, we conclude the estimate \eqref{eq:estimate of the distance between P and P0}.
\end{proof}
\subsection{Analysis of the chromatic aberration for a general radial function $u$}\label{subsec:aberration for general radial u}
In this section, the surface $\Gamma$ is given by the graph of a function $u(x)=v(|x|^2)$. 
In view of Remark \ref{rmk:radial Gamma existence of phase}, a phase $\phi$ exists in a neighborhood of $x=0$ so that the metalens $(\Gamma,\phi)$ refracts all vertical rays with wavelength $\lambda_0$ into a point $P_0=(0,0,p_0)$. 
We analyze chromatic aberration caused by $(\Gamma,\phi)$, that is, how rays with wavelength $\lambda\neq \lambda_0$ are focused. We shall prove that the order of magnitude of the focusing error in the radial case is as in the planar case \eqref{eq:estimate of the distance between P and P0}. 

In order to do this, we first find an expression for $\nabla \phi(x,u(x))$ when $u(x)=v(|x|^2).$ 
In this case, the set of transmitted directions for rays with wavelength $\lambda_0$ is
\[
{\bf m}(x)=\dfrac{(-x,p_0-u(x))}{\sqrt{|x|^2+(p_0-u(x))^2}}.
\]
And from Theorem \ref{thm:metalens  existence} the gradient of $\phi$ satisfies \eqref{eq:nabla phi general surface}.
As in Remark \ref{rmk:radial Gamma existence of phase}, we let $h(x)=-\sqrt{|x|^2+(p_0-u(x))^2}$. 
Set $r=|x|^2,$ so $\nabla u= 2\,v'(r)\,x$, and $h(r)=-\sqrt{r+(p_0-v(r))^2}$. Then
\begin{align*}
\({\bf e}-{\bf m}\)\cdot \nu&=-\dfrac{(x,u-p_0-h)}{h}\cdot\dfrac{(-\nabla u,1)}{\sqrt{1+|\nabla u|^2}}\\
&=\dfrac{-1}{h\,\sqrt{1+|\nabla u|^2}}\(-x\cdot \nabla u + u -p_0-h\)\\
&=\dfrac{-1}{h\,\sqrt{1+4\,v'(r)^2\,r}}\left(-2\,v'(r)\,r+v(r)-p_0-h\right)
\end{align*}
so from \eqref{eq:nabla phi general surface}
\begin{align}\label{eq:gradient phi radial}
\nabla \phi(x,u(x))&=\dfrac{2\pi}{\lambda_0}
\(-\dfrac{(x,v(r)-p_0-h)}{h}+\dfrac{1}{h\,\sqrt{1+4\,v'(r)^2\,r}}\left(-2\,v'(r)\,r+v(r)-p_0-h\right)\,\dfrac{\(-2v'(r)x ,1\)}{\sqrt{1+4v'(r)^2r}}\)\\\notag
&=
-\dfrac{2\pi}{\lambda_0\,h}\left(\dfrac{1+2v'(r)(v(r)-p_0-h)}{1+4v'(r)^2r}\right)\left(x,2v'(r)\,r\right).
\end{align}
To simplify the notation set $A(r)=1+2v'(r)(v(r)-p_0-h)$.

Since the phase $\phi$ is defined in a neighborhood of $x=0$, we assume that $|x|^2=r<R^2$ for some $R$ small.

As in Section \ref{sec:chromatic aberration plane}, we next find conditions on $\lambda$, $\lambda_0$ and $R$ so that rays with wavelength $\lambda$ are not totally internally reflected. We have $\nabla \phi\cdot \nu=0$, then from \eqref{eq:internal reflection} with $n_1=n_2=1$, ${\bf x}={\bf e}$, to avoid total reflection we should verify that
\begin{equation}\label{eq:condition radial internal reflection}
\left({\bf e}\cdot \nu\right)^2\geq \left|{\bf e}-\nabla\left(\lambda\phi/2\pi\right)\right|^2-1.
\end{equation}
Indeed, we have $\left({\bf e}\cdot \nu\right)^2=\dfrac{1}{1+4v'(r)^2r}$, and from \eqref{eq:gradient phi radial}
{\small
\begin{align*}
 \left|{\bf e}-\nabla\left(\lambda\phi/2\pi\right)\right|^2-1&=-2\,{\bf e}\cdot\nabla\left(\lambda\phi/2\pi\right)+\left|\nabla\left(\lambda\phi/2\pi\right)\right|^2\\
&=-2\left(-\dfrac{\lambda}{\lambda_0\,h}\left(\dfrac{A(r)}{1+4v'(r)^2r}\right)2v'(r)r\right)+\left(\dfrac{\lambda}{\lambda_0\,h}\right)^2\dfrac{A(r)^2}{1+4v'(r)^2r}r\\
&=\dfrac{\lambda}{\lambda_0h}\left(\dfrac{A(r)}{1+4v'(r)^2r}\right)\left(\dfrac{\lambda}{\lambda_0\,h}
A(r)+4v'(r)\right)r.
 \end{align*}
}
Hence \eqref{eq:condition radial internal reflection} is satisfied if and only if
$$1\geq 
\dfrac{\lambda}{\lambda_0 h}A(r)\,\left(\dfrac{\lambda}{\lambda_0\,h}\,A(r)+4v'(r)\right)r.$$
We conclude that taking $R$ small enough, above inequality holds for $r\leq R^2$, and total internal reflection is avoided for rays with color $\lambda$.

We next study the direction of the refracted rays with wavelength $\lambda$. From \eqref{eq:SnellLaw bis} the direction of the refracted ray with color $\lambda$ is
\[
{\bf m'}={\bf e}-\mu'\,\dfrac{\(-2v'(r)x,1\)}{\sqrt{1+4v'(r)^2r}}
-
\dfrac{\lambda}{2\pi}\nabla \phi,
\]
where from \eqref{eq:formula for mu} and the above calculation
{\footnotesize
\begin{align}\label{eq:formula for mu'}
\mu'&={\bf e}\cdot \nu-\sqrt{1-\left(\left|{\bf e}-\nabla \(\lambda\,\phi/2\pi\)\right|^2-\left({\bf e}\cdot \nu\right)^2\right)}\\\notag
&=\dfrac{1}{\sqrt{1+4v'(r)^2r}}-
\sqrt{-\dfrac{\lambda}{\lambda_0h}\left(\dfrac{A(r)}{1+4v'(r)^2r}\right)\left(\dfrac{\lambda}{\lambda_0\,h}
A(r)+4v'(r)\right)r+\dfrac{1}{1+4v'(r)^2r}}\\\notag
&=\dfrac{1}{\sqrt{1+4v'(r)^2r}}
\left(1-\sqrt{1-\dfrac{\lambda}{\lambda_0h}A(r)\,\left(\dfrac{\lambda}{\lambda_0\,h}
A(r)+4v'(r)\right)r}\right).
\end{align}
}

Writing ${\bf m'}=\(m_1',m_2',m_3'\)$ yields
\begin{align*}
\(m_1',m_2'\)&=\mu'\,\dfrac{2v'(r)x}{\sqrt{1+4v'(r)^2r}}
+
\dfrac{\lambda}{\lambda_0\,h}\left(\dfrac{A(r)}{1+4v'(r)^2r}\right)x\\
&=\dfrac{1}{1+4v'(r)^2r}\left[2v'(r)\left(1-\sqrt{1-\dfrac{\lambda}{\lambda_0h}A(r)\left(\dfrac{\lambda}{\lambda_0\,h}A(r)+4v'(r)\right)r}\right)+\dfrac{\lambda}{\lambda_0h}A(r)\right]x.
\end{align*}
Recall that $r$ is chosen small enough so that the term inside the square root in the identity above is positive.

Now we find $t$ such that $t\,\(m_1',m_2'\)+x=0$, that is,
\[
\dfrac{t}{1+4v'(r)^2r}\left[2v'(r)\left(1-\sqrt{1-\dfrac{\lambda}{\lambda_0h}A(r)\left(\dfrac{\lambda}{\lambda_0\,h}A(r)+4v'(r)\right)r}\right)+\dfrac{\lambda}{\lambda_0h}A(r)\right]+1=0
\]
obtaining
\[
t=-\dfrac{1+4v'(r)^2r}{2v'(r)\left(1-\sqrt{1-\dfrac{\lambda}{\lambda_0h}A(r)\left(\dfrac{\lambda}{\lambda_0\,h}A(r)+4v'(r)\right)r}\right)+\dfrac{\lambda}{\lambda_0h}A(r)}.\\
\]

Next,
\begin{align*}
m_3'&=1-\mu'\,\dfrac{1}{\sqrt{1+4v'(r)^2r}}-\dfrac{\lambda}{2\pi}\phi_{x_3}(x,u(x))\\
&=1-\dfrac{1}{1+4v'(r)^2r}\left(1-\sqrt{1-\dfrac{\lambda}{\lambda_0h}A(r)\left(\dfrac{\lambda}{\lambda_0\,h}A(r)+4v'(r)\right)r}\right)+\dfrac{\lambda}{\lambda_0h}\left(\dfrac{A(r)}{1+4v'(r)^2r}\right)2v'(r)r\\
&=\dfrac{1}{1+4v'(r)^2r}\left(4v'(r)^2r+\sqrt{1-\dfrac{\lambda}{\lambda_0h}A(r)\left(\dfrac{\lambda}{\lambda_0\,h}A(r)+4v'(r)\right)r}+\dfrac{\lambda}{\lambda_0h}A(r)(2v'(r)r)\right).
\end{align*}
The ray with color $\lambda$ then focuses at the point $P_\lambda=\(0,0,t\,m_3'+u(x)\)$, and we want to see how far is this point from $P_0=(0,0,p_0)$. So 
we then need to estimate the error
\[
t\,m_3'+u(x)-p_0.
\]
Taking limits when $r\to 0$ yields
\begin{align*}
v(r)&\to u(0),\quad 
h(r)\to u(0)-p_0,\quad
A(r)\to 1,\quad
t\to \frac{\lambda_0}{\lambda}(p_0-u(0)),\quad
m_3'\to 1.
\end{align*}
Therefore 
\begin{equation}\label{eq:order of magnitude for radial case}
|t\,m_3'+u(x)-p_0|\to 
\left|\dfrac{\lambda_0}{\lambda}-1\right|\(p_0-u(0)\),\quad \text{as $r\to 0$,}
\end{equation}
obtaining in the radial case that the order of magnitude of the error is as in the planar case \eqref{eq:estimate of the distance between P and P0}.

\section{Comparison with chromatic aberration in standard lenses}\label{sec:standard lens}
We analyze here the chromatic aberration in a standard lens sandwiched by a horizontal plane and a hyperboloid, and compare this dispersion with the one obtained for metasurfaces.

It is known that hyperboloids having appropriate eccentricity refract vertical rays into their focus point. More precisely, for a fixed wavelength $\lambda$,
we have two materials $I,II$ with corresponding refractive indices $n_1,n_2$, respectively,  
and a point $Y=(y,y_3)\in \R^3$ located in material $II$ to be focused; let $\kappa=n_1/n_2>1$. 
Let 
\[
h(x)=y_{n+1}-\dfrac{\kappa\,b}{\kappa^2-1}-\sqrt{\(\dfrac{b}{\kappa^2-1}\)^2+\dfrac{|x-y|^2}{\kappa^2-1}},\qquad b>0,
\]
whose graph is a sheet of a hyperboloid with upper focus $Y$ and eccentricity $\kappa$ as shown in Figure \ref{fig:hyperboloid}.
\begin{figure}[htp]
\begin{center}
\includegraphics[width=2.5in]{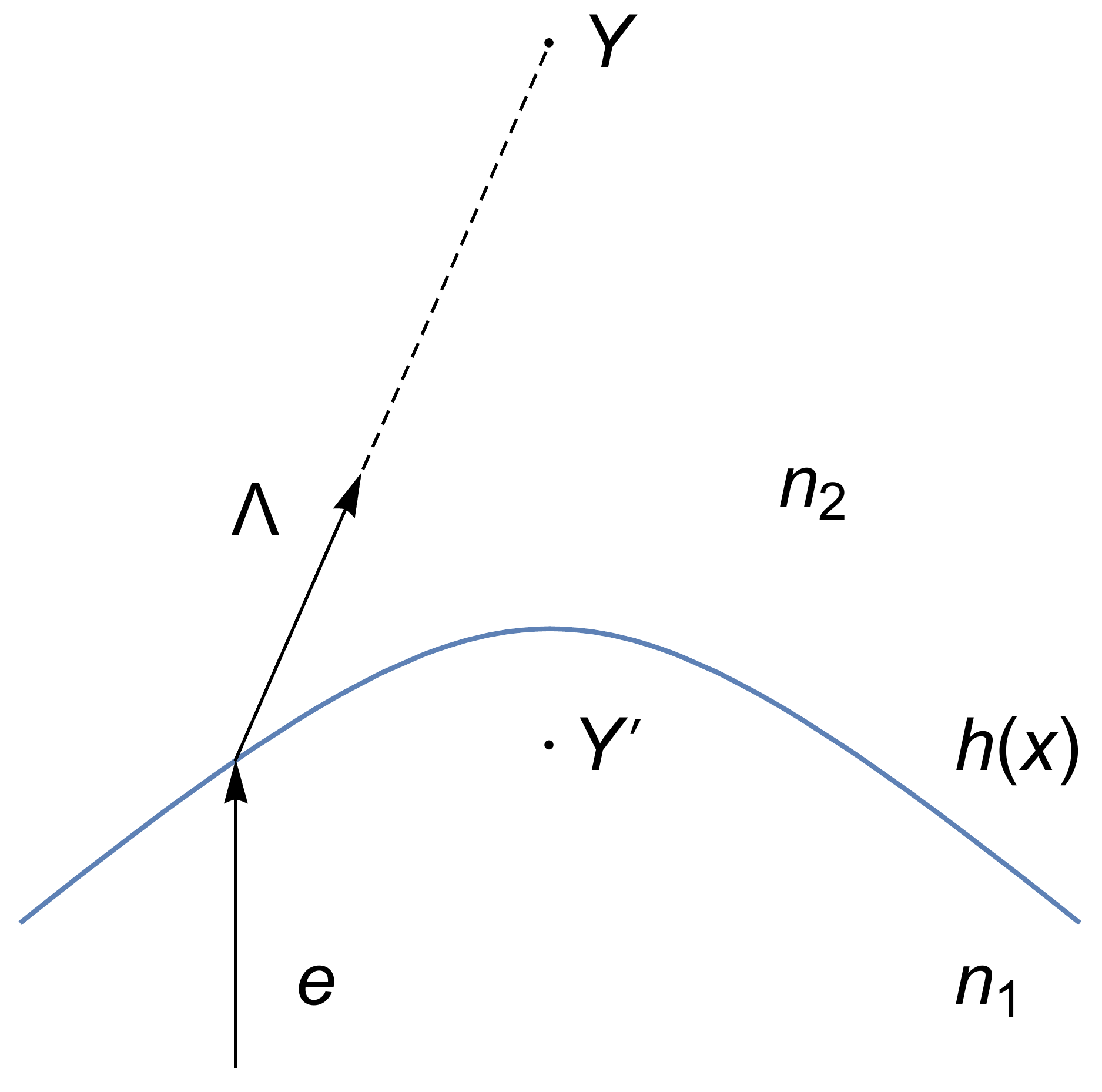}
\end{center}
  \caption{$ $}
  \label{fig:hyperboloid}
\end{figure}
Suppose the material below this hyperboloid is $I$ and the material above is $II$.
If a vertical ray with wavelength $\lambda$ strikes the hyperboloid at a point $(x,h(x))$, then this ray is refracted into a direction passing through the point $Y$ as shown in Figure \ref{fig:hyperboloid}.
We analyze how a vertical ray having different wavelength would be refracted by the same hyperboloid. In other words, how far is the new refracted point from the focus $Y$?   

We assume that $II$ is vacuum, so $n_2=1$ for all wavelengths, $n_1=n=n(\lambda)$ for the wavelength $\lambda$, and the focus $Y=(0,0,0)$;
 ${\bf e}=(0,0,1)$. From the standard Snell law, the incoming ray from below the hyperboloid with direction ${\bf e}$ is refracted by the hyperboloid into a ray with unit direction 
\[
\Lambda=\kappa\,{\bf e}+\delta\,\nu
\]
where $\nu$ is the outer unit normal to $h$ at the striking point, and 
\[
\delta=-\kappa\,({\bf e}\cdot \nu)+\sqrt{1-\kappa^2+\kappa^2\,({\bf e}\cdot \nu)^2},
\]
see \cite[Section 2]{abedin-gutierrez-tralli:regularity-refractors}; with $\kappa=n$.
Suppose now that the vertical ray has wavelength $\lambda'$ so the refractive index for the material $I$ under this wavelength has value $n'=n(\lambda')$. From the Snell law such a ray is then refracted by the hyperboloid into a direction
\[
\Lambda'=\kappa'\,{\bf e}+\delta'\,\nu
\]
with 
\[
\delta'=-\kappa'\,({\bf e}\cdot \nu)+\sqrt{1-\kappa'^2+\kappa'^2\,({\bf e}\cdot \nu)^2},
\]
where $\kappa'=n'$. 

Consider the line through the point $X=(x,h(x))$ with direction $\Lambda'$, and we want to determine where this line intersects the vertical line $x=0$.
That is, we need to find $t$ such that $X+t\,\Lambda'$ intersects the vertical line $x=0$,
and see how far this intersection point is from the original focus $(0,0,0)$. 
To do this we have
$
\nu=\dfrac{\(-\nabla h,1\)}{\sqrt{1+|\nabla h|^2}},
$
so
\[
\Lambda'=n'\,(0,0,1)+\delta'\,\nu=
\(\delta' \(-\dfrac{\nabla h}{\sqrt{1+|\nabla h|^2}}\),n'+\delta'\,\dfrac{1}{\sqrt{1+|\nabla h|^2}}\)
\]
and therefore $t$ must satisfy 
$
x+t\,\delta' \(-\dfrac{\nabla h}{\sqrt{1+|\nabla h|^2}}\)=(0,0)$.
Since 
$\nabla h=\dfrac{-1}{\sqrt{b^2+(n^2-1)\,|x|^2}}\,x$, we get 
$
\sqrt{1+|\nabla h|^2}=
\dfrac{\sqrt{b^2+n^2\,|x|^2}}{\sqrt{b^2+(n^2-1)\,|x|^2}}$.
Therefore 
\[
t
=-\dfrac{\sqrt{b^2+n^2\,|x|^2}}{\delta'}.
\]
To calculate $\delta'$, we first have
$
{\bf e}\cdot \nu
=
\dfrac{\sqrt{b^2+(n^2-1)\,|x|^2}}{\sqrt{b^2+n^2\,|x|^2}},
$
hence
\begin{align*}
\delta'
&=
-n'\,\dfrac{\sqrt{b^2+(n^2-1)\,|x|^2}}{\sqrt{b^2+n^2\,|x|^2}}
+
\sqrt{1-n'^2+n'^2\(\dfrac{b^2+(n^2-1)\,|x|^2}{b^2+n^2\,|x|^2}\)}\\
&=
\dfrac{-n'\,\sqrt{b^2+(n^2-1)\,|x|^2} + \sqrt{b^2+\(n^2-n'^2\)\,|x|^2}}{\sqrt{b^2+n^2\,|x|^2}}.
\end{align*}
Thus,
\begin{align*}
t&=-\dfrac{b^2+n^2\,|x|^2}{-n'\,\sqrt{b^2+(n^2-1)\,|x|^2} + \sqrt{b^2+\(n^2-n'^2\)\,|x|^2}}\\
&=-\dfrac{\(b^2+n^2\,|x|^2\)\(n'\,\sqrt{b^2+(n^2-1)\,|x|^2} + \sqrt{b^2+\(n^2-n'^2\)\,|x|^2}\)}{b^2+\(n^2-n'^2\)\,|x|^2 -n'^2\,\(b^2+(n^2-1)\,|x|^2\)}\\
&=
\dfrac{n'\,\sqrt{b^2+(n^2-1)\,|x|^2} + \sqrt{b^2+\(n^2-n'^2\)\,|x|^2}}{n'^2-1}.
\end{align*}
The last component of $X+t\,\Lambda'$ equals
\begin{align*}
E(x)&=h(x)+t\,\(n'+\dfrac{\delta'}{\sqrt{1+|\nabla h|^2}}\)\\
&=
-\dfrac{n\,b}{n^2-1}-\sqrt{\(\dfrac{b}{n^2-1}\)^2+\dfrac{|x|^2}{n^2-1}}
+n'\,\(\dfrac{n'\,\sqrt{b^2+(n^2-1)\,|x|^2} + \sqrt{b^2+\(n^2-n'^2\)\,|x|^2}}{n'^2-1}\)\\
&\qquad \qquad 
-\sqrt{b^2+(n^2-1)\,|x|^2}\\
&=-\dfrac{n\,b}{n^2-1}
-
\dfrac{1}{n^2-1}\sqrt{b^2+\(n^2-1\)\,|x|^2}
+\dfrac{1}{n'^2-1}\,\sqrt{b^2+\(n^2-1\)\,|x|^2}\\
&\qquad \qquad+\dfrac{n'}{n'^2-1}\,\sqrt{b^2+\(n^2-n'^2\)\,|x|^2}\\
&=-\dfrac{n\,b}{n^2-1}
+\(\dfrac{1}{n'^2-1}-\dfrac{1}{n^2-1}\)\,\sqrt{b^2+\(n^2-1\)\,|x|^2}
+\dfrac{n'}{n'^2-1}\,\sqrt{b^2+\(n^2-n'^2\)\,|x|^2}.
\end{align*}
The ray with wavelength $\lambda'$ is refracted at $X$ into the point $(0,0,E(x))$.
Let
\[
g(r)=-\dfrac{n\,b}{n^2-1}
+\(\dfrac{1}{n'^2-1}-\dfrac{1}{n^2-1}\)\,\sqrt{b^2+\(n^2-1\)\,r}
+\dfrac{n'}{n'^2-1}\,\sqrt{b^2+\(n^2-n'^2\)\,r},
\]
we assume $r>0$ satisfying $b^2+\(n^2-n'^2\)\,r\geq 0$ to avoid total reflection of rays with color $\lambda'$.
If $n>n'$, then $g$ is strictly increasing and we obtain
\[
E(x)\geq E(0),
\]
for all $x$.
And if $n<n'$, then $g$ is strictly decreasing and so 
\[
E(x)\leq E(0),
\]
for $x$ satisfying $b^2+\(n^2-n'^2\)\,|x|^2\geq 0$.

We have
\[
E(0)= b\(\dfrac{1}{n'-1}-\dfrac{1}{n-1}\)
=
b\,\dfrac{n-n'}{(n'-1)(n-1)}.
\]
Let us translate the refractive indices in terms of wavelengths.
To see the order of magnitude in the error $E$ when $n$ and $n'$ are given in terms of wavelengths, we use Cauchy approximate dispersion formula, see \cite[Sec. 2.3.4, Formula (43)]{book:born-wolf} or \cite[Sec. 23.3]{jenkins-white:fundamentalsofoptics} (valid only in the visible spectrum)
\[
n(\lambda)=1+A+\dfrac{B}{\lambda^2}.
\] 
Here the terms with powers of $\lambda$ bigger that four in \cite[Sec. 2.3.4, Formula (41)]{book:born-wolf} have been neglected and $n^2$ is replaced by $2(n-1)$ when $n$ takes values for various gases; see discussion in \cite[Sec. 2.3.4, page 100]{book:born-wolf} and Table 2.6 therein. 
We set $n=n(\lambda)$ and $n'=n(\lambda')$, so from Cauchy's formula
\begin{align*}
E(0)&=b\(\dfrac{\dfrac{B}{\lambda^2}-\dfrac{B}{\lambda'^2}}{\(A+\dfrac{B}{\lambda^2}\)\(A+\dfrac{B}{\lambda'^2}\)}\)
=
b\,\dfrac{B}{\lambda^2\,\(A+\dfrac{B}{\lambda^2}\)\(A+\dfrac{B}{\lambda'^2}\)}
\(1-\(\dfrac{\lambda}{\lambda'}\)^2\)\\
&=
C(A,B,b,\lambda,\lambda')\,\(1+\dfrac{\lambda}{\lambda'}\)
\(1-\dfrac{\lambda}{\lambda'}\).
\end{align*}
The order of magnitude of this error, except for a bounded multiplicative constant, and in terms of $\lambda/\lambda'$, is similar to the order of magnitude in 
 \eqref{eq:estimate of the distance between P and P0} where $\lambda_0$ plays the role of $\lambda$ and $\lambda$ the role of $\lambda'$.
Notice that in the formula above for $E(0)$, the coefficient $b$ can be chosen arbitrarily, in particular, if $b$ is sufficiently small we can control the size of the multiplicative factor in front of $1-\dfrac{\lambda}{\lambda'}$. Notice also that using more terms in the full Cauchy dispersion formula $n(\lambda)=A+B/\lambda^2+C/\lambda^4 +D/\lambda^6+\cdots $ yields the same order of magnitude in the error $E(0)$.

\section*{Acknowledgements}{\small C. E. G. was partially supported by NSF grant DMS--1600578, and A. S. was partially supported by Research Grant 2015/19/P/ST1/02618 from the National Science Centre, Poland, entitled "Variational Problems in Optical Engineering and Free Material Design".}\\ 
\vspace{-.6cm}
\begin{wrapfigure}{l}{0.2\textwidth} 
\begin{center} \includegraphics[width=0.2\textwidth]{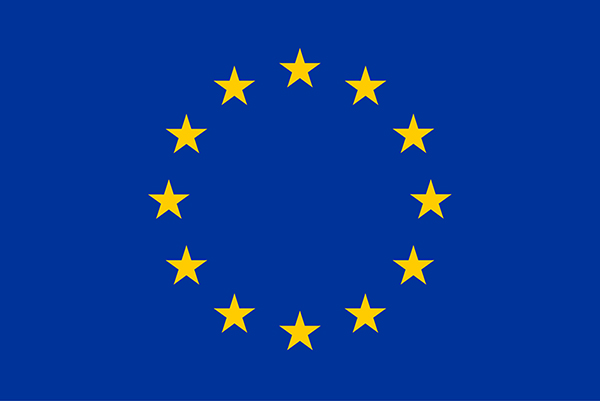} 
\end{center} 
\end{wrapfigure}\\
{\footnotesize This project has received funding from the European Union's Horizon 2020 research and innovation program under the Marie Sk\l{}odowska-Curie grant agreement No. 665778.}\\ \\

\newcommand{\etalchar}[1]{$^{#1}$}

\end{document}